\newtheorem{definition}{Definition}[]
\newtheorem{theorem}{Theorem}[section]
\newtheorem{lemma}[theorem]{Lemma}
\newcommand{\Cc}{{\ensuremath{\mathcal{C}}}}
\newcommand{\Ss}{{\ensuremath{\mathcal{S}}}}
\newcommand{\Tt}{{\ensuremath{\mathcal{T}}}}
\newcommand{\Uu}{{\ensuremath{\mathcal{U}}}}
\newcommand{\Oh}{{\ensuremath{\mathcal{O}}}}
\newcommand{\anonymyze}[1]{}%
\newcommand{\ceil}[1]{\ensuremath{\left\lceil{#1}\right\rceil}}%
\newcommand{\ignore}[1]{}%
\newcommand{\pre}{{\rm pre}}
\newcommand{\post}{{\rm post}}
\newcommand{\low}[1][]{{\rm low#1}}
\newcommand{\mindn}[1][]{{\rm MinDn#1}}
\newcommand{\maxdn}[1][]{{\rm MaxDn#1}}
\newcommand{\maxup}[1][]{{\rm MaxUp#1}}
\newcommand{\DeepestDnCut}[1][]{{\rm DeepestDnCut}}
\newcommand{\DeepestDnCutNoMin}[1][]{{\rm DeepestDnCutNoMin}}
\newcommand{\DeepestDnCutNoMax}[1][]{{\rm DeepestDnCutNoMax}}
\newcommand{\LCA}[1][]{{\rm LCA}}
\DeclareMathOperator{\polylog}{polylog}
\begin{document}

\title{Determining 4-edge-connected components in linear time\thanks{This research is a part of projects that have received funding from the European Research Council (ERC)
		under the European Union's Horizon 2020 research and innovation programme
		(Grant Agreement 714704 (W.~Nadara), 677651 (M.~Smulewicz), and 948057 (M.~Sokołowski)).
		} }

\date{\today}

\author{Wojciech Nadara\thanks{Institute of Informatics, University of Warsaw, Poland (\texttt{w.nadara@mimuw.edu.pl})} \and Mateusz Radecki\thanks{University of Warsaw, Poland (\texttt{mr386052@students.mimuw.edu.pl})} \and Marcin Smulewicz\thanks{Institute of Informatics, University of Warsaw, Poland (\texttt{m.smulewicz@mimuw.edu.pl})} \and Marek Sokołowski\thanks{Institute of Informatics, University of Warsaw, Poland (\texttt{marek.sokolowski@mimuw.edu.pl})}}

\maketitle

\begin{textblock}{20}(0, 13.0)
	\includegraphics[width=40px]{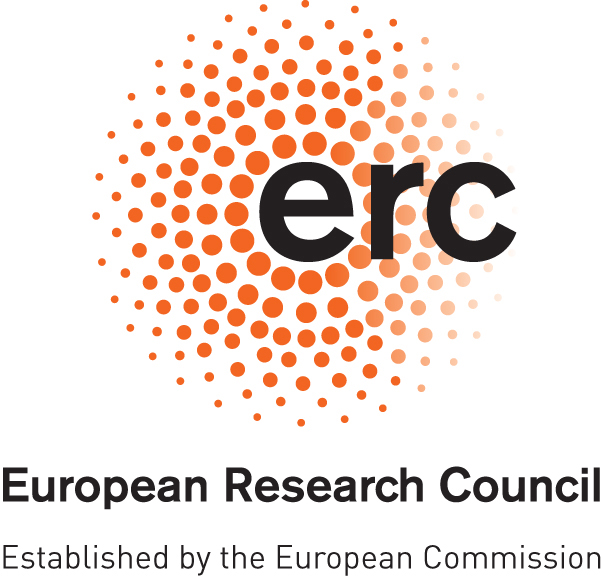}%
\end{textblock}
\begin{textblock}{20}(-0.25, 13.4)
	\includegraphics[width=60px]{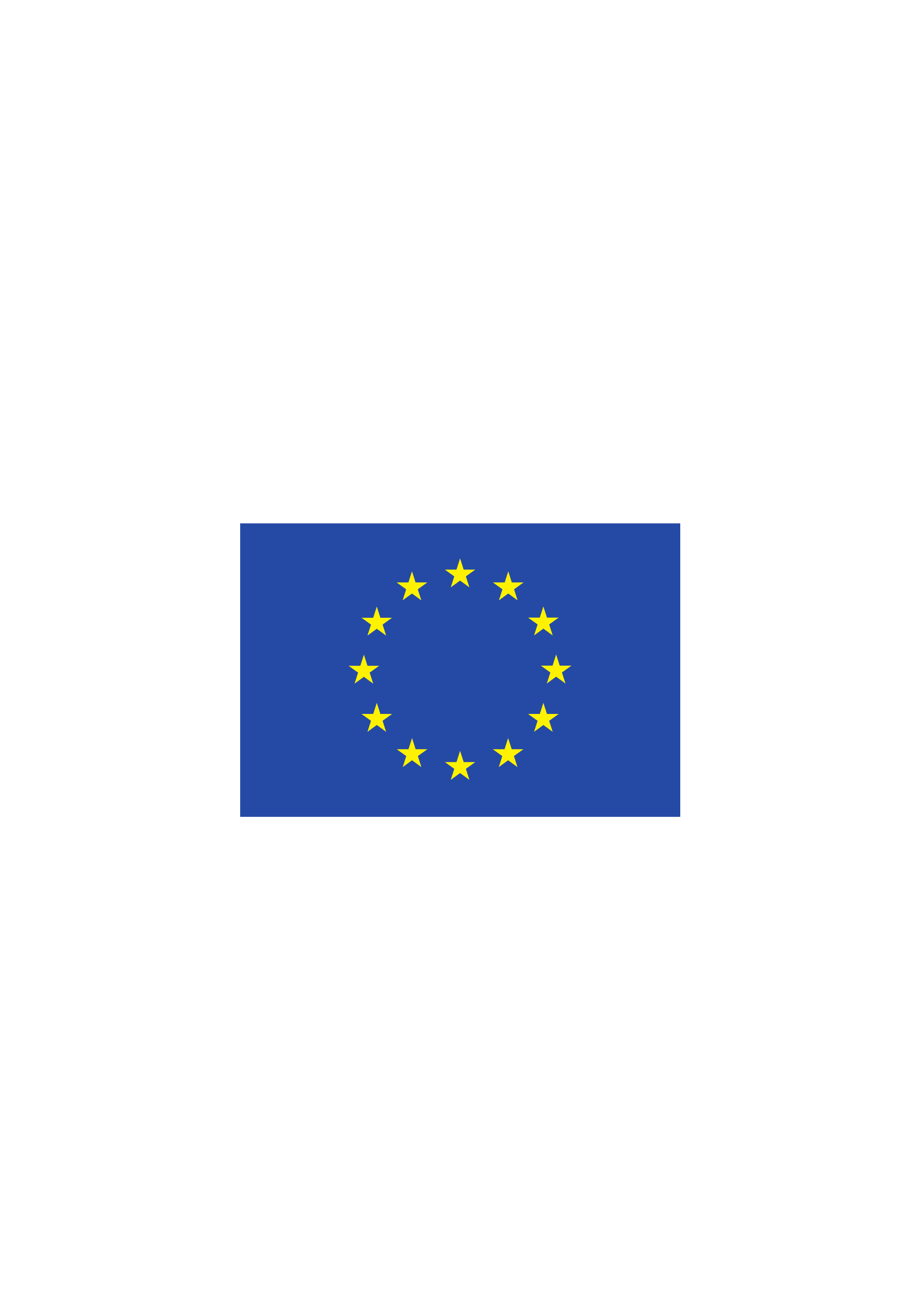}%
\end{textblock}

\begin{abstract}
In this work, we present the first linear time deterministic algorithm computing the 4-edge-connected components of an undirected graph. First, we show an algorithm listing all 3-edge-cuts in a~given 3-edge-connected graph, and then we use the output of this algorithm in order to determine the 4-edge-connected components of the graph.
\end{abstract}

\newpage

\section{Introduction}
\label{sec:introduction}

The connectivity of graphs has always been one of the fundamental concepts of graph theory.
The foremost connectivity notions in the world of undirected graphs are the \emph{$k$-edge-connectedness} and the \emph{$k$-vertex-connectedness}.
Namely, a~graph $G$ is \emph{$k$-edge-connected} for $k \geq 1$ if it is connected, and it remains connected after removing any set of at most $k-1$ edges.
Similarly, $G$ is \emph{$k$-vertex-connected} if it contains at least $k + 1$ vertices, and it remains connected after the removal of any set of at most $k - 1$ vertices.

These notions can be generalized to the graphs that are not well-connected.
Namely, if $H$ is a~maximal $k$-vertex-connected subgraph of $G$, we say that $H$ is a~\emph{$k$-vertex-connected component} of $G$.
The edge-connected variant is, however, defined differently: we say that a~pair of vertices $u, v$ of $G$ is $k$-edge-connected if it is not possible to remove at most $k-1$ edges from $G$ so that $u$ and $v$ end up in different connected components.
This relation of $k$-edge-connectedness happens to be an~equivalence relation; this yields a definition of a~\emph{$k$-edge-connected component} of $G$ as an~equivalence class of the relation.
We remark that the notions of $k$-vertex-connected components and $k$-edge-connected components coincide for $k = 1$ as both simply describe the connected components of~$G$.
However, for $k \geq 2$ these definitions diverge; in particular, for $k \geq 3$ the $k$-edge-connected components of a~graph do not even need to be connected.


There has been a~plethora of research into the algorithms deciding the $k$-vertex- and $k$-edge-connectedness of graphs, and decomposing the graphs into $k$-vertex- or $k$-edge-connected components.
However, while classical, elementary, and efficient algorithms exist for $k = 1$ and $k = 2$, these problems become increasingly more difficult for the larger values of $k$.
In fact, even for $k = 4$, there were no known linear time algorithms to any of the considered problems.
The following description presents the previous work in this area for $k \in \{1, 2, 3, 4\}$, and exhibits the related work for the larger values of $k$:

\paragraph{$\mathbf{k = 1}$.}
Here, the notions of $k$-vertex-connectedness and $k$-edge-connectedness reduce to that of connectivity and the connected components of a~graph. In the static setting, determining the connected components in linear time is trivial.
As a~consequence, more focus is being laid on dynamic algorithms maintaining the connected components of graphs.
In the incremental setting, where the edges can only be added to the dynamic graph, the optimal solution is provided by disjoint-set data structures \cite{FU}, which solve the problem in the amortized $\Oh(\alpha(n))$ time per query, where $\alpha(n)$ denotes the inverse of the Ackermann's fast-growing function. The fully dynamic data structures are also considered
 \cite{Wulff, Sparsification, Cos2, Cos3, Cos4, ThorupDuzo, Cos6, DynConnWorst, Bridge, DynConnSODA17}.

\paragraph{$\mathbf{k = 2}$.}
One step further are the notions of 2-vertex-connectivity (biconnectivity) 
and 2-edge-connectivity.
In the static setting, partitioning of a~graph into $2$-vertex-connected or $2$-edge-connected components are classical problems, both solved in linear time by exploiting the properties of the low function~\cite{2ConnStatic}.
The incremental versions of both problems are again solved optimally in the amortized $\Oh(\alpha(n))$ time per query \cite{Westbrook1992}.
Significant research has been done in the dynamic setting as well \cite{DBLP:conf/wads/PengSS19, ThorupDuzo, Sparsification, Bridge, Dyn3EConn, Henzinger1995, Dyn2ConnBack}.

\paragraph{$\mathbf{k = 3}$.} As a next step, we consider 3-vertex-connectivity (triconnectivity) and 3-edge-connectivity.
An~optimal, linear time algorithm detecting the $3$-vertex-connected components was first given by Hopcroft and Tarjan~\cite{TriConn}.
The first linear algorithm for $3$-edge-connectivity was discovered much later by Galil and Italiano~\cite{EdgeToVertex}, where they present a~linear time reduction from the $k$-edge-connectivity problem to $k$-vertex-connectivity for $k \geq 3$, showing that in the static setting, the former problem is the easier of the two.
This was later followed by a~series of works simplifying the solution for $3$-edge-connectivity~\cite{Tsin1, Tsin2, 3E1, Tsin3}.
The incremental setting \cite{3Incr, IncrSPQR} and the dynamic setting \cite{Dyn3EConn, Sparsification} were also considered.

We also mention that in the case of 3-vertex-connectivity, there exists a structure called SPQR-tree which succinctly captures the structure of 2-vertex-cuts in graphs~\cite{SPQR, IncrSPQR}.
Its edge-connectivity analogue also exists, but we defer its introduction to the general setting.

\paragraph{$\mathbf{k = 4}$.}
We move on to the problems of 4-vertex-connectivity and 4-edge-connectivity.
A~notable result by Kanevsky et al.~\cite{QuadConn} supports maintaining 4-vertex-connected components in incremental graphs, with an~optimal $\Oh(\alpha(n))$ amortized time per query. Their result also yields the solution for static graphs in $\Oh(m + n \alpha(n))$ time complexity.
By applying the result of Galil and Italiano~\cite{EdgeToVertex}, we derive a~static algorithm determining the $4$-edge-connected components in the same time complexity.
This algorithm is optimal for $m = \Omega(n \alpha(n))$.

Another result by Dinitz and Westbrook~\cite{Dinitz1998} supports maintaining the $4$-edge-connected components in the incremental setting. Their algorithm processes any sequence of queries in $\Oh(q + m + n \log n)$ time where $q$ is the number of queries, and $m$ is the total number of inserted edges to the graph.

However, it is striking that the fastest solutions for $4$-edge-connectivity and $4$-vertex-connectivity for static graphs were derived from the on-line algorithms working in the incremental setting.
In particular, no linear time algorithms for $k = 4$ were known before.


\paragraph{$\mathbf{k \geq 5}$.}
As a side note, we also present the current knowledge on the general problems of $k$-vertex-connectivity and $k$-edge-connectivity. A~series of results~\cite{Karger, DetMinCut, HenzingerCut, AnotherMinCut} show that it is possible to compute the minimum edge cut of a~graph (i.e., determine the edge-connectivity of a~graph) in near-linear time.
The previously mentioned work by Dinitz and Westbrook~\cite{Dinitz1998} maintains the $k$-edge-connected components of an~incremental graph which is assumed to already have been initialized with a~$(k-1)$-edge-connected graph.
The data structure answers any sequence of on-line queries in $\Oh(q + m + k^2 n \log{(n / k)})$ time, where $q$ is the number of queries, and $m$ is the number of edges in the initial graph.

Gomory and Hu \cite{GomoryHu} proved that for any weighted, undirected graph $G$ there exists a weighted, undirected tree $T$ on the same vertex set such that for any two vertices $s, t \in V(G)$, the value of the minimum $s$-$t$ edge cut in $T$ is equal to the value of the minimum $s$-$t$ edge cut in $G$.
Moreover, such a~tree can be constructed using $n-1$ invocations of the maximum flow algorithm.
In an interesting result by Hariharan et al.~\cite{PartialGomory-Hu}, the decomposition of any graph into $k$-edge-connected components is constructed in $\Oh((m + nk^3) \cdot \polylog(n))$ time, producing a~partial Gomory-Hu tree as its result.

Dinitz et el.~\cite{Cactus} showed that the set of all minimum edge cuts can be succinctly represented with a~\textit{cactus graph}. When the minimum edge cut is odd, this cactus simplifies to a tree (see \cite[Corollary 8]{SimplerCactus}). These results imply that if the size of the minimum cut is odd, then the number of minimum cuts is $\Oh(n)$ and if it is even, then the number of minimum cuts is $\Oh(n^2)$. The structure of $k$-vertex-cuts was also investigated \cite{Longhui}.


\subsection{Our results}
In this work, we present a linear time, deterministic algorithm partitioning static, undirected graphs into $4$-edge-connected components. Even though the area of the dynamic versions of the algorithms for $k$-edge-connectivity is still thriving, the progress in static variants appears to have plateaued. In particular, both subquadratic algorithms determining the $4$-edge-connected components \cite{QuadConn, Dinitz1998} originate from their dynamic incremental equivalents and are almost thirty years old, yet they did not achieve the optimal linear running time. Hence, our work 
constitutes the first progress in the static setting of $4$-edge-connectivity in a~long time. As a~side result, our algorithm also produces the tree representation of $3$-edge-cuts as explained in~\cite{SimplerCactus}.

\subsection{Organization of the work}
The paper is organized as follows.
In Section~\ref{Smu1}, we show how to reduce the problem of determining $4$-edge-connected components to the problem of determining $4$-edge-connected components in $3$-edge-connected graphs.
In Section~\ref{sec:randomized_3cut}, we show a~linear time, randomized Monte Carlo algorithm for listing all 3-edge-cuts in 3-edge-connected graphs.
In Section \ref{sec:deterministic_3cut}, we show how to remove the dependency on the randomness in the algorithm from the previous section, producing a~linear time, deterministic algorithm listing all 3-edge-cuts in 3-edge-connected graphs.
Then, in Section \ref{Smu2}, we construct a tree of 3-edge-cuts in a~3-connected graph, given the list of all its 3-edge cuts. This tree is then used to determine the 4-edge-connected components of the graph. 
Finally, in Section \ref{open-problems}, we present open problems related to this work.

\section{Preliminaries}

\paragraph{Graphs.}
  In this work, we consider undirected, connected graphs which may contain self-loops and multiple edges connecting pairs of vertices (i.e., multigraphs).
  The number of vertices of a~graph and the number of its edges are usually denoted $n$ and $m$, respectively.
  
  We use the notions of $k$-edge-connectedness and $k$-edge-connected components defined in Section~\ref{sec:introduction}.
  Moreover, we say that a set of $k$ edges of a~graph forms a~\emph{$k$-edge cut} (or a~\emph{$k$-cut} for simplicity) if the removal of these edges from the graph disconnects it.

\paragraph{DFS trees.}
  Consider a~run of the \emph{depth-first search} algorithm~\cite{DBLP:journals/siamcomp/Tarjan72} on a~connected graph $G$.
  A~\emph{depth-first search tree} (or a~\emph{DFS tree}) is a~spanning tree $\Tt$ of $G$, rooted at the source of the search $r$, containing all the edges traversed by the algorithm.
  After the search is performed, each vertex $v$ is assigned two values: its \emph{preorder} $\pre(v)$ (also called \emph{discovery time} or \emph{arrival time}) and \emph{postorder} $\post(v)$ (also \emph{finishing time} or \emph{departure time}).
  Their definitions are standard~\cite{DBLP:books/daglib/0023376}
  ; it can be assumed that the values range from $1$ to $2n$ and are pairwise different.

  The edges of $\Tt$ are called \emph{tree edges}, and the remaining edges are called \emph{back edges} or \emph{non-tree edges}.
  In this setup, every back edge $e$ connects two vertices remaining in ancestor-descendant relationship in $\Tt$; moreover, the graph $\Tt + e$ contains exactly one cycle, named the \emph{fundamental cycle} of~$e$.
  For a~vertex $v$ of $G$, we define $\Tt_v$ to be the subtree of $\Tt$ rooted at~$v$; similarly, for a~tree edge $e$ whose deeper endpoint is $v$, we set $\Tt_e = \Tt_v$.

  When a~DFS tree $\Tt$ of $G$ is fixed, it is common to introduce directions to the edges of the graph: all tree edges of $\Tt$ are directed away from the root of $\Tt$, and all back edges are pointed towards the root of $\Tt$.
  Then, $uv$ is a~directed edge (either a~tree or a~back edge) whose \emph{origin} (or \emph{tail}) is $u$, and whose \emph{destination} (or \emph{head}) is $v$.

  For our convenience, we introduce the following definition: a~back edge $e = pq$ \emph{leaps over} a~vertex $v$ of the graph if $p \in \Tt_v$, but $q \notin \Tt_v$; we analogously define \emph{leaping over} a~tree edge $f$.
  
  Moreover, we define a~partial order $\leq_\Tt$ on the vertices of $G$ and the tree edges of $\Tt$ as follows:
  $x \leq_\Tt y$ if the simple path in $\Tt$ connecting the root of $\Tt$ with $y$ also contains $x$.
  Then, $\leq_\Tt$ has one minimal element---the root of $\Tt$---and each maximal element is a~leaf of $\Tt$.
  When the tree $\Tt$ is clear from the context, we may write $\leq$ instead of $\leq_\Tt$.
  Using the precomputed preorder and postorder values in $\Tt$, we can verify if $x \leq_\Tt y$ holds for given $x, y \in V(G) \cup E(\Tt)$ in constant time.

  We use the classical $\low$ function defined by Hopcroft and Tarjan~\cite{2ConnStatic}.
  However, for our purposes it is more convenient to define it as a function $\low : E(\Tt) \to (E \setminus E(\Tt)) \cup \{\perp\}$ such that for a~tree edge $e$, $\low(e)$ is the back edge $uv$ leaping over $e$ minimizing the preorder of its head $v$, breaking ties arbitrarily; or $\perp$, if no such edge exists.
  This function can be computed for all tree edges in time linear with respect to the size of the graph.

\paragraph{Xors.}
For sets $A$ and $B$, by $A \oplus B$ we denote their symmetric difference, which is $(A \cup B) \setminus (A \cap B)$, and we call it a \textit{xor} of $A$ and $B$. Moreover, for non-negative integers $a$ and $b$, by $a \oplus b$ we denote their xor, that is, an integer whose binary representation is a~bitwise symmetric difference of the binary representations of $a$~and $b$. The definitions can be easily generalized to the symmetric differences of multiple sets or integers.

\section{Reduction to the $3$-edge-connected case}
\label{Smu1}
In this section, we present~a way to reduce the problem of building the structure of $4$-connected components to a~set of independent, simpler instances of the problem.
Each produced instance will be a~$3$-edge-connected graph corresponding to a~single $3$-edge-connected component of the original graph.
This transformation of the input will be vital to the correctness of our work since the algorithm described in the following sections assumes the $3$-edge-connectedness of the input graph.
We remark that this is not a~new contribution~\cite{DinitzRatujeDupe}; we present it here for completeness only.

Firstly, each connected component of $G$ can be considered independently.
Similarly, bridges (i.e., $1$-edge cuts) split the given graph into independent $2$-edge-connected components. 
Moreover, it can be shown that for $k \ge 2$, the family of $k$-edge-connected components of $G$ will not be altered by the removal of the bridges.
Thus, without loss of generality, we assume that $G$ is $2$-edge-connected.

For a~$2$-edge-connected graph $G$, we first build a~structure of its $3$-edge-connected components. The shape of this structure is a~\emph{cactus graph}, asserted by the following theorem:
\begin{theorem} [\cite{DinitzRatujeDupe}] \label{the:2_connected_cactus}
  For a~given $2$-edge-connected graph $G$, there exists an~auxiliary graph $H = (U,F)$ such that
  \begin{itemize}
  	\item each edge $e \in F$ lies on exactly one simple cycle of $H$,
  	\item $U$ is the family of all $3$-edge-connected components of $G$,
  	\item there exists a~bijection $\phi : E' \to F$ where $E' \subseteq E$ is the set of all edges belonging to some $2$-edge cut of $G$, such that for every edge $e = uv \in E'$, its image $\phi(e)$ is an~edge of~$H$ connecting the $3$-edge-connected component containing $u$ with the $3$-edge-connected component containing $v$,
  	\item a~pair of edges $e,f \in E'$ forms a~$2$-edge-cut if and only if $\phi(e)$ and $\phi(f)$ belong to the same cycle of $H$.
  \end{itemize}
\end{theorem}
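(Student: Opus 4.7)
The plan is to construct $H$ directly from $G$ and then verify each of the four stated properties. I would first set $U$ equal to the set of $3$-edge-connected components of $G$. The edge set $F$ and the bijection $\phi$ are defined together through an auxiliary lemma: an edge $uv \in E(G)$ lies in some $2$-cut (i.e., belongs to $E'$) if and only if $u$ and $v$ lie in different $3$-edge-connected components. The forward direction holds because if $\{uv, f\}$ is a $2$-cut, then $u$ and $v$ must lie on opposite sides of the resulting bipartition (otherwise $uv$ would not be a cut edge), which witnesses that they are not $3$-edge-connected. Conversely, if $u$ and $v$ are not $3$-edge-connected, some $2$-cut $\{e, f\}$ separates them; but $uv$ is the only edge of $G$ that could keep $u$ and $v$ in the same component after removing $\{e, f\}$, so $uv \in \{e, f\}$. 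With this lemma, for each $uv \in E'$ I would add an edge $\phi(uv) \in F$ joining the two (distinct) $3$-edge-connected components of $u$ and $v$ in $H$.

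The cycles of $H$ arise from the relation $\approx$ on $E'$ defined by $e \approx f$ iff $e = f$ or $\{e, f\}$ is a $2$-cut. Reflexivity and symmetry are immediate, but transitivity is the main obstacle and the combinatorial heart of the proof. I would prove it via submodularity of the cut function $c(S) = |E(S, V \setminus S)|$: given two $2$-cuts $\{e_1, e_2\}$ and $\{e_2, e_3\}$ inducing bipartitions $(X, \bar X)$ and $(Y, \bar Y)$ respectively, one splits the argument on whether the cuts cross or nest. In the crossing case, the inequalities $c(X) + c(Y) \geq c(X \cap Y) + c(X \cup Y)$ and $c(X) + c(Y) \geq c(X \setminus Y) + c(Y \setminus X)$, combined with $c(S) \geq 2$ for every non-trivial $S$ (by $2$-edge-connectedness of $G$), force all four corner sets $X \cap Y$, $X \cap \bar Y$, $\bar X \cap Y$, $\bar X \cap \bar Y$ to induce $2$-cuts. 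In the nesting case, one argues directly from the cut-edge incidences. A case analysis on where the endpoints of the shared edge $e_2$ land among the quadrants then exhibits $\{e_1, e_3\}$ as a $2$-cut, completing transitivity.

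Once $\approx$ is an equivalence relation, I would build the cycles of $H$ from its classes. For a class $\Cc = \{e_1, \ldots, e_k\} \subseteq E'$, removing all edges of $\Cc$ from $G$ splits the graph into exactly $k$ connected pieces, and after a suitable cyclic labelling, $e_i$ connects piece $P_i$ to piece $P_{i+1 \bmod k}$; the cyclic order is forced by the nested family of bipartitions obtained by fixing one edge of $\Cc$ and varying the other. One further checks that within each piece $P_i$, all vertices incident to edges of $\Cc$ lie in a single $3$-edge-connected component, which supplies the vertex of $H$ for that position of the cycle (for $k = 2$, the cycle degenerates to two parallel edges between two components, still a valid cycle in the multigraph sense). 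Finally, I would verify the four stated properties: each edge of $F$ lies in a unique cycle by construction; $U$ is the $3$-edge-connected partition by definition; $\phi$ respects endpoints by the auxiliary lemma; and two edges of $E'$ form a $2$-cut iff they are $\approx$-equivalent iff their images share a cycle of $H$, directly from the construction.
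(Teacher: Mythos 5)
The paper does not prove Theorem~\ref{the:2_connected_cactus}: it is imported as a known structural result with a citation, so there is no in-paper proof to measure your attempt against, and your argument has to stand on its own. Your plan is the standard route and its engine is essentially sound. The characterization of $E'$ as the edges whose endpoints lie in different $3$-edge-connected components is correct, and transitivity of $\approx$ does hold; in fact you can bypass submodularity entirely. Writing $(X,\bar X)$ and $(Y,\bar Y)$ for the bipartitions induced by $\{e_1,e_2\}$ and $\{e_2,e_3\}$, depending on which diagonal pair of quadrants contains the two endpoints of the shared edge $e_2$, either $\delta(X\cap Y)$ or $\delta(X\setminus Y)$ avoids $e_2$ while being contained in $\delta(X)\cup\delta(Y)\subseteq\{e_1,e_2,e_3\}$, hence contained in $\{e_1,e_3\}$; $2$-edge-connectedness then forces equality whenever that quadrant is a proper nonempty vertex subset, and the nested case ($X\subsetneq Y$ after complementing) is handled identically via $Y\setminus X$.

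Two steps, however, are asserted rather than argued, and both are genuine gaps. First, the claim that the two vertices of a piece $P_i$ incident to edges of the class $\Cc$ lie in the same $3$-edge-connected component is exactly what pins down the endpoints of the edges $\phi(e_{i-1})$ and $\phi(e_i)$ in $H$, and nothing you wrote implies it; it needs its own argument (for instance, exhibit three pairwise edge-disjoint paths between those two boundary vertices: one inside $P_i$, and two going around through the remaining pieces using the cut edges of $\Cc$, relying on $2$-edge-connectedness of each side). Second, ``each edge of $F$ lies in a unique cycle by construction'' does not follow merely from having built one cycle per $\approx$-class; you must also rule out $H$ containing any other simple cycle through those edges, i.e., show that two class-cycles of $H$ meet in at most one vertex (equivalently, that contracting every class-cycle leaves a tree). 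Both facts are true and standard, but they are not free, and the proof is incomplete without them.
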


In order to build the structure from Theorem~\ref{the:2_connected_cactus}, we use the result of Galil and Italiano~\cite{EdgeToVertex} to find all 3-edge-connected components of $G$.
Then, $H$ is defined as a~quotient graph created by identifying the vertices within each 3-edge-connected component.
This can easily be performed in linear time with respect to the size of $G$.

By definition, the partition of the vertices of $G$ into 4-edge-connected components is a~refinement of the partition into 3-edge-connected components.
However, we can no longer restrict our attention to the independent subgraphs induced on 3-edge-connected components (which is what we did in the case of connected components and 2-edge-connected components).
In fact, the $3$-edge-connected components of $G$ may even be disconnected.
In order to handle this problem, we will need to transform $G$ in order to turn each of its 3-edge-connected components into a separate connected component which itself is 3-edge-connected.

Let us now fix $X \subseteq V$---a~$3$-edge-connected component of $G$.
We will now construct a~$3$-edge-connected graph $S = (X, Y)$ whose partition into 4-edge-connected components will be the same as the partition of $X$ into 4-edge-connected components in the original graph.
Initially, we assign to $Y$ the set of all edges of $G[X]$.
Then, for each cycle $C$ of $H$ incident to $X$ ($X \in V(H)$), we add an~additional special edge to $Y$.
Formally, let $\phi(e_1), \phi(e_2) \in F$ be the two edges of the cycle $C$ in $H$ that are incident to $X$.
Then, both edges $e_1, e_2 \in E$ have exactly one endpoint belonging to $X$; denote them $a$ and $b$, respectively.
Then, we add $ab$ to $Y$ as the additional special edge for the cycle $C$.
Intuitively, the new edge simulates a~path in $G$ connecting $a$ and $b$ which is internally disjoint with $X$ and which goes through the 3-edge-connected components around the cycle $C$.

It now turns out that $S$ is $3$-edge-connected and captures the connectivity properties of $X$ in $G$:



\begin{lemma}[\cite{DinitzRatujeDupe}]
  \label{thm:cut_structure}
  The graph $S$ defined above has the following properties:
  \begin{itemize}
    \item it is $3$-edge-connected,
    \item for each $3$-edge-cut $c$ of $G$ dividing $X$ into nonempty parts, there is a $3$-edge-cut $c'$ of $S$ dividing $X$ in the same way,
    \item for each $3$-edge-cut $c'$ of $S$, there is a nonempty set of $3$-edge-cuts of $G$, each dividing $X$ in the same way as $c'$. 
  \end{itemize}
\end{lemma}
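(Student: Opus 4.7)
The plan is to establish a size-preserving correspondence lifting cuts of $S$ to cuts of $G$, from which all three parts follow. The key structural input is the following: for each cycle $C$ of $H$ incident to $X$, let $e_1^C = a_C a_1^C$ and $e_2^C = b_C b_1^C$ be the two $G$-edges corresponding via $\phi$ to the cactus edges of $C$ adjacent to $X$ (so $a_C, b_C \in X$), and let $V_C \subseteq V(G) \setminus X$ be the union of the $3$-edge-connected components of $G$ corresponding to the $H$-vertices in the connected component of $H \setminus \{X\}$ containing the non-$X$ vertices of $C$. By the cactus structure of $H$, the sets $V_C$ partition $V(G) \setminus X$, and the only $G$-edges between $X$ and $V_C$ are $e_1^C$ and $e_2^C$. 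A short edge-counting argument using the $2$-edge-connectedness of $G$ further shows that $G[V_C]$ is connected: otherwise some component of $G[V_C]$ would be incident to at most one of $\{e_1^C, e_2^C\}$ and thus separable from the rest of $G$ by at most one edge.

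Given any cut $(A', B')$ of $S$, I would lift it to a partition $(\tilde A, \tilde B)$ of $V(G)$ by placing $A' \subseteq \tilde A$, $B' \subseteq \tilde B$, and, for each cycle $C$ adjacent to $X$, putting all of $V_C$ on the common side of $a_C$ and $b_C$ if they lie on the same side of the $S$-cut, or on an arbitrarily chosen side (say $\tilde A$) otherwise. A per-cycle count then shows $|E_G(\tilde A, \tilde B)| = |E_S(A', B')|$: the $G[X]$-edges of the cut match identically; each non-crossing special edge contributes zero cut edges to $G$; and each crossing special edge $a_C b_C$ contributes exactly one cut edge to $G$, namely the unique edge of $\{e_1^C, e_2^C\}$ whose $X$-endpoint is opposite the chosen side for $V_C$.

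From this correspondence, parts~(1) and~(3) follow directly. For~(1), any $2$-edge-cut of $S$ would lift to a $2$-edge-cut of $G$ separating vertices of $X$, contradicting that $X$ is a $3$-edge-connected component. For~(3), a $3$-edge-cut of $S$ lifts to a $3$-edge-cut of $G$ inducing the same partition of $X$, and the freedom of choice at each straddling cycle produces the nonempty family of such $3$-cuts claimed. For part~(2), given a $3$-cut $c = E_G(\tilde A, \tilde B)$ dividing $X$ into nonempty $A, B$, I would consider the $S$-cut $(A, B)$ and argue that each cycle $C$ with $a_C \in A$, $b_C \in B$ contributes a distinct edge of $c$ from the external part $E_G(V_C) \cup \{e_1^C, e_2^C\}$; this yields $|E_S(A, B)| \le |c| = 3$, and part~(1) supplies the matching lower bound, forcing equality.

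The main obstacle is the case analysis in part~(2), ensuring that each straddling cycle $C$ genuinely contributes an edge to $c$. The cases split on the placement of $a_1^C$ and $b_1^C$: either one of $e_1^C, e_2^C$ itself crosses $(\tilde A, \tilde B)$ and we are done, or else both $a_1^C$ and $b_1^C$ lie in $V_C$ on opposite sides, in which case the connectedness of $G[V_C]$ forces an internal edge of $c$. Establishing this connectedness of $G[V_C]$ in the first place, though not conceptually hard, is where the $2$-edge-connectedness of $G$ and the structure of the cactus $H$ are essentially used, and will be the delicate step of the argument.
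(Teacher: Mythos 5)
The paper does not actually prove Lemma~\ref{thm:cut_structure}: it is cited from the reference \cite{DinitzRatujeDupe} and only stated for completeness, so there is no in-paper argument to compare against. Evaluating your proposal on its own merits, the argument is correct and complete modulo routine verification of the structural facts you list at the start. The heart of it — the size-preserving lift $(A',B') \mapsto (\tilde A, \tilde B)$, established by a per-cycle edge count — is sound and gives all three bullets cleanly: $(1)$ by lifting any hypothetical $\le 2$-cut of $S$, $(3)$ by lifting a genuine $3$-cut of $S$, and $(2)$ by restricting a $3$-cut of $G$ to $X$ and charging each crossing special edge $a_Cb_C$ to a distinct edge of $c$ outside $G[X]$.

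Two small points worth writing out carefully if you flesh this into a full proof. First, the claim that the sets $V_C$ are pairwise disjoint rests on the fact that in a cactus, two distinct cycles through $X$ cannot have their non-$X$ portions lying in the same component of $H \setminus \{X\}$ (otherwise one could build a new cycle reusing an edge of $C_1$, violating the cactus property); that deserves a sentence. Second, the connectedness of $G[V_C]$ should be argued slightly more carefully than ``some component is incident to at most one of $e_1^C, e_2^C$'': split on whether $a_1^C$ and $b_1^C$ lie in the same component of $G[V_C]$, using connectedness of $G$ in the first case and $2$-edge-connectedness in the second (note that individual $3$-edge-connected components inside $V_C$ need not induce connected subgraphs, so the argument must go through $V_C$ as a whole, which yours does). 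The degenerate cases $a_C = b_C$ and $a_1^C = b_1^C$ also pass through your case analysis without trouble, but it is worth a glance.
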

\begin{figure}[h]
	\centering
	\includegraphics[scale=0.6]{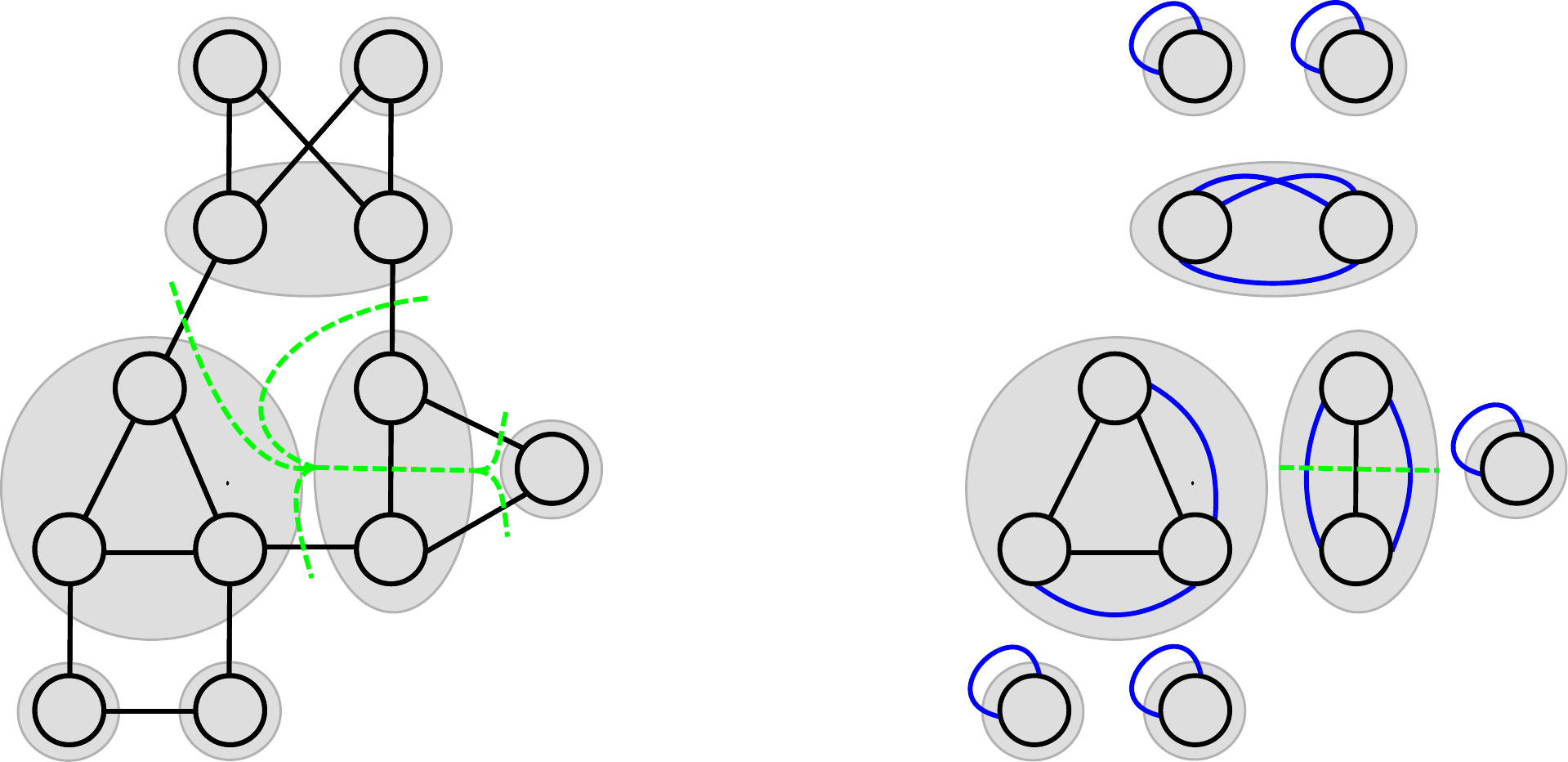}
	\caption{A $2$-edge-connected graph with marked $3$-edge-connected components (left) and corresponding graphs $S$ for each $X$ with blue special edges (right). Green dashed edges show an example of all $3$-edge-cuts (left) corresponding to one $3$-edge-cut (right).}
	\label{fig:treerec_example}
\end{figure}

By Lemma~\ref{thm:cut_structure}, each $4$-edge-connected component $Y \subseteq X$ of $S$ is also a~$4$-edge-connected component of $G$.
Therefore, the reduction is sound.

Given the structure of $3$-edge-connected components of $G$, it is easy to construct graphs $S$ for each particular $X$ in total $\Oh(n + m)$ time, hence the only remaining part is to construct the structure of $4$-edge-connected components for each $S$ independently. 

Observe that the total number of special edges added to all $3$-edge-connected components is equal to the total length of the cycles in $H$.
Thus, the reduction can easily be performed in time linear with respect to the size of $G$.
As a result, without loss of generality, we can assume that the given graph $G$ is $3$-edge-connected.

\section{Simple randomized algorithm}
\label{sec:randomized_3cut}
In this section, we are going to describe a randomized linear time algorithm listing 3-edge-cuts in 3-edge-connected graphs.
In particular, the existence of this algorithm will imply that the number of 3-edge-cuts in any 3-edge-connected graph is at most linear. It has no major advantages over the algorithm presented in the succeeding section, but it is significantly simpler and it already contains most of the core ideas. Thus, it serves as a good intermediate step in the explanation.

We begin with the description of some auxiliary data structures.

\begin{theorem}[\cite{DBLP:journals/jcss/GabowT85}]\label{FU}
  \label{thm:linear_fu}
  There exists a~data structure for the disjoint set union problem which, when initialized with an~undirected tree $\Tt$ (the ``union tree'') on $n$ vertices, creates $n$ singleton sets.
  After the initialization, the data structure accepts the following queries in any order:
  \begin{itemize}
    \item $\mathsf{find}(x)$: returns the index of the set containing $x$,
    \item $\mathsf{union}(x, y)$: if $x$ and $y$ are in different sets, then an arbitrary one of them is replaced with their sum and the other one with the empty set.
      This query can only be issued if $xy$ is an edge of $\Tt$.
  \end{itemize}
  The data structure executes any sequence of $q$ queries in total $\Oh(n + q)$ time.
\end{theorem}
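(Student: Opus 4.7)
The plan is a micro--macro decomposition of the union tree $\Tt$ combined with word-RAM table lookups on the micro parts and the standard $\alpha$-bounded union--find on the macro part; this is essentially the Gabow--Tarjan static-tree-union-find strategy.

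First I would partition $\Tt$ into $\Theta(n/\log n)$ disjoint connected ``microtrees'', each of size at most $s:=\tfrac{1}{4}\log n$, together with a ``macrotree'' $M$ whose nodes correspond to the microtrees. Such a partition is built in linear time by a greedy bottom-up traversal that detaches a subtree as soon as its accumulated size exceeds $s/2$ but before it exceeds $s$. Every edge of $\Tt$ then lies either inside a single microtree or corresponds to a unique edge of $M$, and both sets have size $O(n/\log n)$ for $M$ and $O(n)$ in total for the microtrees.

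Within one microtree $\mu$ of size at most $s$, the restriction of the current partition to $V(\mu)$ can be encoded in $O(s)=O(\log n)$ bits, so it fits in a single machine word. I would precompute, once and for all, a table indexed by $(\text{microtree shape},\text{encoded partition},\text{query})$ returning the correct $\mathsf{find}$ or updated encoding for $\mathsf{union}$. Because $s\le \tfrac{1}{4}\log n$, both the number of rooted shapes on $\le s$ vertices and the number of partition encodings are at most $2^{O(s)}\le n^{1/2}$, so the whole table occupies $O(n)$ cells and can be filled in $O(n)$ time by straightforward dynamic programming over subsets. After this preprocessing, every intra-microtree operation costs $O(1)$.

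For the macrotree $M$ of size $O(n/\log n)$ I would run Tarjan's classical disjoint-set data structure. A macro operation is invoked only when the effect of an input query escapes its microtree: a $\mathsf{find}(x)$ triggers at most one $\mathsf{find}$ on $M$ (when $x$'s microtree-local answer is the root of $\mu$), and a $\mathsf{union}(x,y)$ along an edge of $M$ triggers at most one $\mathsf{union}$ on $M$. Hence the number of macro operations is at most $q + O(n/\log n)$, costing $O((q+n/\log n)\cdot \alpha(n))=O(n+q)$, since $\alpha(n)\le \log n$ for all $n$ of interest. Summed with the $O(1)$-per-query micro cost and the $O(n)$ preprocessing, this yields the claimed $O(n+q)$ bound.

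The step I expect to be the main obstacle is the micro/macro interface. For each microtree $\mu$ I have to maintain a consistent mapping between ``the set of $\mu$ whose local representative is the root of $\mu$'' and the corresponding node of $M$, so that (i) a $\mathsf{find}(x)$ whose local answer is the root of $\mu$ is forwarded to exactly one macro representative, (ii) a cross-microtree $\mathsf{union}$ merges those macro nodes exactly once and updates both affected local encodings, and (iii) a sequence of intra-microtree unions that later cause a previously internal set to touch the root of $\mu$ correctly ``promotes'' that set into the macro structure. Verifying that this promotion can be done in $O(1)$ per query using the precomputed table, and that sharing one table per isomorphism class of microtrees truly keeps preprocessing linear, is the delicate part of the argument.
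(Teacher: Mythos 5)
The paper does not prove this theorem --- it is cited as a black box from Gabow and Tarjan \cite{DBLP:journals/jcss/GabowT85} --- so there is no in-paper proof to compare against, and your sketch has to stand on its own. Your plan does follow the high-level Gabow--Tarjan strategy (micro--macro decomposition of the union tree, precomputed word-level lookup tables over all microtree shapes and partition encodings, and a classical union--find on the macrotree), but the key quantitative step is incorrect. You bound the macro-level cost by $O\bigl((q+n/\log n)\cdot\alpha(n)\bigr)$ and claim it is $O(n+q)$ ``since $\alpha(n)\le\log n$''. That inequality only yields $O(q\log n + n)$, and more importantly the term $q\cdot\alpha(n)$ is \emph{not} $O(q)$: the one-parameter inverse Ackermann function is unbounded, so $q\,\alpha(n)=\omega(q)$. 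Getting the linear bound requires the two-parameter inverse Ackermann $\alpha(m',n')$ with $n'=\Theta(n/\log n)$ macro elements and $m'=q+O(n/\log n)$ macro operations, together with a case split: once $m'/n'$ exceeds roughly $\log\log n'$ the amortized cost per macro operation becomes $O(1)$, and in the complementary range $m'\cdot\alpha(m',n')$ is already $o(n)$. Without this refinement the amortization you give does not establish the claimed $O(n+q)$ running time.

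You also explicitly leave the micro--macro interface --- promoting a local set that grows to touch a microtree root into the macro union--find, and keeping the word-encoding consistent with the macro representative after a cross-microtree $\mathsf{union}$ --- as an unresolved ``delicate part''. That is not a side detail: it is where most of the correctness invariants of the Gabow--Tarjan construction actually live. As written, the proposal correctly identifies the right decomposition but has a genuine hole in the amortization and defers the consistency argument entirely.
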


\begin{lemma}\label{enrich-parent}
  \label{lem:linear_fu_with_lowest}
It is possible to enrich the data structure from Theorem \ref{FU}, so that after rooting it at an arbitrary vertex, we are able to answer the following query in constant time:
	\begin{itemize}
		\item $\mathsf{lowest}(x)$: returns the smallest vertex of $S$ with respect to $\le_{\Tt}$, where $S$ is the set containing $x$.
	\end{itemize}
\end{lemma}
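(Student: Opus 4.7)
The key structural observation is that, since the data structure of Theorem~\ref{FU} only permits $\mathsf{union}$ operations along edges of $\Tt$, every non-empty set it maintains is always a connected subtree of $\Tt$ (by a trivial induction on the number of unions executed so far). Once $\Tt$ is rooted at the chosen vertex, each such connected subtree has a unique $\le_\Tt$-minimum, namely the vertex of the subtree closest to the root; this is exactly what $\mathsf{lowest}$ must return.

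Concretely, the plan is to root $\Tt$ and compute the parent-child orientation of every tree edge in $\Oh(n)$ preprocessing, and then to maintain an auxiliary array $\mathsf{top}$ indexed by the identifiers used by the underlying data structure, with the invariant that if $i$ is the identifier of a non-empty set $S$, then $\mathsf{top}[i]$ is the $\le_\Tt$-minimum of $S$. Initially, for each singleton $\{v\}$ we set $\mathsf{top}[v] = v$. A $\mathsf{lowest}(x)$ query then amounts to returning $\mathsf{top}[\mathsf{find}(x)]$, which is $\Oh(1)$ on top of the single underlying $\mathsf{find}$ call.

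The only nontrivial step is the update rule for $\mathsf{top}$ under $\mathsf{union}(x,y)$. Let $xy$ be the tree edge along which the union happens and, using the preprocessed orientation, assume without loss of generality that $x$ is the parent of $y$ in the rooted $\Tt$. Before the union, the set $A$ containing $y$ is a connected subtree of $\Tt$ disjoint from $\{x\}$; since $x$ is the $\Tt$-parent of $y$, $A$ cannot contain any proper $\Tt$-ancestor of $y$ either (such an ancestor would force $x \in A$ by connectivity of $A$ in $\Tt$), so $A \subseteq \Tt_y$ and $y$ itself is its $\le_\Tt$-minimum. The set $B$ containing $x$ contains the vertex $x$, which is a $\le_\Tt$-predecessor of every vertex of $A$; hence the $\le_\Tt$-minimum of $A \cup B$ coincides with $\mathsf{top}[\mathsf{find}(x)]$ evaluated before the union. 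I would therefore read this value, invoke the underlying $\mathsf{union}$, then call $\mathsf{find}(x)$ again to learn which set identifier survived, and finally write the saved value into $\mathsf{top}$ at that index. Each of these contributes $\Oh(1)$ overhead per union and per query, so the total running time over a sequence of $q$ operations remains $\Oh(n+q)$ by Theorem~\ref{FU}. The main conceptual hurdle is the connectivity invariant identified at the start; once it is established, the update rule and its correctness are essentially forced.
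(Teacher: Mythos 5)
Your proof is correct and follows essentially the same approach as the paper: maintain, for each set, its $\le_\Tt$-minimum (equivalently its LCA in the rooted tree) and update it in $\Oh(1)$ on each union. The paper's version phrases the merge step as "one of $\LCA(S)$ and $\LCA(R)$ is an ancestor of the other; take that one," while you go one small step further and observe that since the union happens along a tree edge $xy$ with $x$ the parent of $y$, the winner is always $\mathsf{top}[\mathsf{find}(x)]$, so no ancestor comparison is even needed. This is a minor implementation refinement, not a different argument; the underlying invariant (each set is a connected subtree of $\Tt$, hence has a unique $\le_\Tt$-minimum) and the overall bookkeeping are the same.
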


Note that each set induces a connected subgraph of $\Tt$, hence the smallest vertex of $S$ is well-defined.

\begin{proof}
	We enrich our structure with information about the lowest common ancestor of all elements of such a~set. Call this $\LCA(S)$ for a set $S$. Whenever we merge two sets $S$ and $R$, it holds that one of $\LCA(S)$ and $\LCA(R)$ is the ancestor of the other. Moreover, this ancestor can be determined in constant time. Hence, $\LCA(S \cup R)$ can be determined in constant time. 
\end{proof}

\begin{theorem}
  \label{thm:min_edge_path}
  There exists a~deterministic algorithm that takes as input:
  \begin{itemize}
    \item an~undirected, unrooted tree $\Tt$ with $n$ vertices,
    \item $p$ weighted paths $P_1, P_2, \dots, P_p$ in the tree,
      where the path $P_i$ ($1 \leq i \leq p$) is the unique path between vertices $u_i$, $v_i$
      and has weight $w_i \in \{0, 1, \dots, C\}$,
    \item and a~positive integer $k$,
  \end{itemize}
  and for each edge $e$ of the tree returns the indices of $k$ paths with the lowest weight containing~$e$,
    breaking ties arbitrarily; if $e$ is a~part of fewer than $k$ paths, all such paths are returned.
  The time complexity of the algorithm is $\Oh(nk + p + C)$.
  \begin{proof}
    Since weights $w_i$ belong to the set $\{0, 1, \ldots, C\}$ we can sort paths $P_1, \ldots, P_p$ with respect to their weights in time $\Oh(p + C)$. Hence, from this point on, we assume that it has already been done and that $w_1 \le w_2 \le \ldots \le w_p$.
    
    We now process these paths in that order. We create the data structure from Lemma \ref{enrich-parent} and initialize it with $\Tt$. For each edge $e \in E(\Tt)$, we are going to maintain a~set $R_e$, which at the end will be the set of the indices of the desired paths for edge $e$.
    
    We root $\Tt$ in an arbitrary vertex, which allows us to define the function $\mathsf{parent}(v)$ mapping $v$ to its parent in $\Tt$.
    
    We define an auxiliary function \textsc{Go}$(u, v, i)$ with the following pseudocode:
    
    \begin{algorithm}[H]
    	\begin{algorithmic}
    		
    		\caption{Updating information with the path $P_i$ on its part from $u_i$ to $\LCA(u_i, v_i)$} \label{alg:min_path_oracle_halfpath}
    		\Function {Go} {$u, v, i$}
    		\State $u \gets \mathsf{lowest}(u)$
    		\While {$u$ is not an ancestor of $v$}
    		\State $e \gets (u, \mathsf{parent}(u))$
    		\State $R_e \gets R_e \cup \{i\}$
    		\If {$|R_e| = k$} $\mathsf{union}(u, \mathsf{parent}(u))$ \EndIf
    		\State $u \gets \mathsf{lowest}(\mathsf{parent}(u))$
    		\EndWhile
    		\EndFunction
    		
    	\end{algorithmic}
    \end{algorithm}
    
    Whenever we process path $P_i$, we execute \textsc{Go}$(u_i, v_i, i)$ and \textsc{Go}$(v_i, u_i, i)$. 
    
    We maintain an invariant that after processing the paths $P_1, \ldots, P_i$, sets $R_e$ are the sets of indices of $k$ first paths containing $e$; or all of them, if there are less than $k$ paths among $P_1, \ldots, P_i$ containing $e$ as an edge. Moreover, $\mathsf{union}(u, v)$ is executed as soon as the size of $R_{uv}$ becomes equal to $k$. The reader is encouraged to think of the $\mathsf{union}$ as the contraction of the edge $uv$ in $\Tt$. A~contracted subgraph is identified with the lowest common ancestor of all its vertices (i.e., the result of $\mathsf{lowest(x)}$ for any $x$ from such a~subgraph). The function \textsc{Go}$(u, v, i)$ traverses all non-contracted edges on the path from $u$ to $l$, where $l$ is the lowest common ancestor of $u$ and $v$, and adds the index $i$ to the sets $R_e$ for all traversed edges $e$ (and contracts them if necessary). The total size of all sets $R_e$ will never exceed $nk$, hence the total number of iterations of the \textsc{while} loop throughout all executions of \textsc{Go} will not exceed $nk$ as well. We infer that the time complexity of this algorithm is $\Oh(nk + p + C)$.
  \end{proof}
\end{theorem}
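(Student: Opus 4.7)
The plan is to process the paths in order of non-decreasing weight, maintaining for every edge $e$ a set $R_e$ of indices that at the end of the algorithm will be the desired answer for $e$. Sorting the paths by weight can be done with a bucket sort in $\Oh(p + C)$ time, after which the invariant to preserve is that, once paths $P_1, \ldots, P_i$ have been handled, each $R_e$ contains either the first $k$ indices among $P_1, \ldots, P_i$ whose paths use $e$, or all such indices when there are fewer than $k$ of them.

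A naive walk along each path edge by edge could cost $\Oh(n)$ per path. To keep the total work close to $\Oh(nk)$, I would exploit the fact that once $|R_e| = k$, the edge $e$ becomes irrelevant for all future paths. Concretely, I would instantiate the data structure from Lemma~\ref{lem:linear_fu_with_lowest} on $\Tt$ rooted at an arbitrary vertex, and think of each $\mathsf{union}$ as contracting the corresponding tree edge. To handle a path $P_i$ with endpoints $u_i, v_i$, I would split it at $\LCA(u_i, v_i)$ and walk upward from each endpoint independently. A single upward step from a vertex $u$ proceeds by first jumping to $\mathsf{lowest}(u)$ (the topmost vertex currently identified with $u$), then taking the tree edge $e$ to its parent; after inserting $i$ into $R_e$, if $|R_e|$ just reached $k$, I would $\mathsf{union}$ the two endpoints of $e$ so that subsequent traversals skip it in constant time. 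Termination (reaching the $\LCA$) would be detected in constant time using precomputed pre/post-order values on $\Tt$.

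Correctness follows inductively from the invariant: processing $P_i$ modifies only the sets of edges actually lying on $P_i$, and those edges whose set already has size $k$ were contracted earlier and so are correctly left untouched, because the bucket sort guarantees that we have already seen $k$ paths of weight at most $w_i$ through that edge. For the running time, every iteration of the upward walk either inserts a new element into some $R_e$, which happens at most $nk$ times in total since $|R_e| \le k$, or it triggers a $\mathsf{union}$, of which at most $n-1$ can occur. Together with the $\Oh(p)$ cost of initiating the two walks per path and the $\Oh(p + C)$ cost of bucket-sorting, the complexity comes out to $\Oh(nk + p + C)$, matching the claim.

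The main obstacle I expect is justifying that after each $\mathsf{union}$ the subsequent calls to $\mathsf{lowest}$ continue to provide the correct ``next unprocessed ancestor'' on the way up. This relies on the fact that every union is performed between a vertex and its $\Tt$-parent, so each set always induces a connected subtree of $\Tt$ whose topmost vertex under $\le_\Tt$ is exactly what we need to jump to; this fits precisely within the guarantees of Lemma~\ref{lem:linear_fu_with_lowest}, and one should also verify that the loop condition ``$u$ is not an ancestor of $v$'' remains stable under these contractions so that the walks from the two endpoints of a path meet correctly at their $\LCA$ without overshooting.
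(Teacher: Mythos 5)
Your proposal matches the paper's proof essentially step for step: bucket-sort by weight, process paths in non-decreasing order of weight, walk each half-path upward from its endpoint to the LCA using $\mathsf{lowest}$ to skip contracted segments, insert indices into $R_e$, and contract an edge via $\mathsf{union}$ as soon as $|R_e|$ reaches $k$, yielding the same $\Oh(nk + p + C)$ bound by the same counting argument. The subtlety you flag about $\mathsf{lowest}$ returning the correct next vertex after contractions is exactly the point the paper handles via Lemma~\ref{enrich-parent}, so there is no gap.
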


We proceed to the description of our randomized algorithm.
Let us choose an arbitrary vertex $r$ of the graph, and perform a depth-first search from $r$. Let $\Tt$ be the resulting DFS tree. We are now going to define a \emph{hashing function} $H : E \to \mathcal{P}(E)$, where $\mathcal{P}(E)$ denotes the powerset of $E$; the value $H(e)$ will be called \emph{a~hash} of $e$. If $e \not\in \Tt$, then we define $H(e) = \{e\}$. Otherwise, we take $H(e)$ as the set of non-tree edges leaping over $e$. Let us note that the set $C_e$ of edges $f$ such that $e \in H(f)$ forms a cycle---the fundamental cycle of $e$.

\begin{lemma}\label{xoring-edges}
	For a connected graph $G = (V, E)$ and a subset $A$ of its edges, the graph $G' \coloneqq G  - A$ is disconnected if and only if there is a nonempty subset $B \subseteq A$ such that xor of the hashes of the edges in $B$ is an empty set.
\end{lemma}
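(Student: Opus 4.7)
The plan is to recognize Lemma~\ref{xoring-edges} as a concrete instance of the cycle-space vs cut-space duality of $G$ over $\mathbb{F}_2$. First I would unpack the hashing: for every non-tree edge $f$ and every edge $e$, one has $f \in H(e)$ exactly when $e$ lies on the fundamental cycle $C_f$. Indeed, for a non-tree edge $e$ this reduces to $e = f$, and for a tree edge $e$ it says that $f$ leaps over $e$, i.e.\ that $e$ lies on the tree path between the endpoints of $f$. Consequently, a non-tree edge $f$ appears in $\bigoplus_{e \in B} H(e)$ iff $|B \cap C_f|$ is odd, and so the xor is empty iff $|B \cap C_f|$ is even for every non-tree edge $f$.

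For the forward implication, assume $G - A$ is disconnected and pick a component $S$ of $G - A$ with $\emptyset \neq S \neq V$. Then $B := \partial_G(S)$ is a nonempty subset of $A$, and since any cycle of $G$ crosses the partition $(S, V \setminus S)$ an even number of times, $|B \cap C_f|$ is even for every non-tree edge $f$. By the observation above, the xor of the hashes of the edges of $B$ is empty.

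For the backward implication, assume a nonempty $B \subseteq A$ satisfies $|B \cap C_f| \equiv 0 \pmod{2}$ for every non-tree edge $f$, and reconstruct a separating partition from $B$. Root $\Tt$ at $r$, and for each $v \in V$ let $\ell(v) \in \{0, 1\}$ be the parity of the number of tree edges of $B$ lying on the $r$-$v$ path in $\Tt$; put $S := \{v : \ell(v) = 0\}$. For a tree edge $uv$, the definition immediately gives $\ell(u) \neq \ell(v)$ iff $uv \in B$. For a non-tree edge $f = uv$, the symmetric difference of the $r$-$u$ and $r$-$v$ tree paths is precisely $C_f \setminus \{f\}$, so $\ell(u) + \ell(v) \equiv |B \cap (C_f \setminus \{f\})| \pmod{2}$, and combined with the evenness of $|B \cap C_f|$ this also yields $\ell(u) \neq \ell(v)$ iff $f \in B$. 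Hence $B = \partial_G(S)$, and since $B$ is nonempty we have $\emptyset \neq S \neq V$, so $G - A$ (a subgraph of $G - B$) is disconnected.

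I expect the backward direction to be the main obstacle: the forward direction uses only the elementary fact that any cycle crosses any vertex bipartition an even number of times, whereas the backward direction must manufacture the bipartition from the parity conditions, and the parity-labelling $\ell$ is where the real work happens.
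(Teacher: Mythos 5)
Your proposal is correct and follows essentially the same route as the paper's proof. The forward direction (fundamental cycles cross the cut $\partial_G(S)$ an even number of times, hence the hash xor vanishes) matches the paper verbatim in substance. The backward direction — defining the parity label $\ell$ along tree paths from the root and showing $B = \partial_G(S)$ — is the paper's two-coloring of vertices in different notation; your observation that $\ell(u)+\ell(v) \equiv |B \cap (C_f \setminus \{f\})|$ for a non-tree edge $f=uv$ is exactly the contrapositive argument the paper runs. The paper separately observes that $B$ must contain a tree edge to conclude both color classes are nonempty; you instead derive $\emptyset \neq S \neq V$ directly from $B = \partial_G(S)$ and $B \neq \emptyset$, which is a little cleaner but logically equivalent. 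The framing in terms of $\mathbb{F}_2$ cycle-space/cut-space duality is a nice conceptual gloss but does not change the argument. One tiny gap worth a sentence: when you conclude ``the xor is empty iff $|B \cap C_f|$ is even for every non-tree edge $f$,'' you implicitly use the fact that every $H(e)$ contains only non-tree edges, so no tree edge can ever appear in the xor; stating this explicitly would close the quantifier cleanly.
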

\begin{proof}
	Let us start with proving that if $G' = (V, E \setminus A)$ is disconnected, then there is a~nonempty subset $B \subseteq A$ such that xor of hashes of edges from $B$ is the empty set.
	
	If $G'$ is disconnected, then we can partition $V$ into two nonempty sets $L$ and $R$ such that all edges between $L$ and $R$ belong to $A$.
	Let $B$ be the set of the edges between $L$ and $R$.
	We claim that xor of the hashes of the edges from $B$ is the empty set.
	Denote that xor by $X$.
	Let us take any edge $e$.
	Since $C_e$ is a cycle, it contains an even number of edges from $B$.
	Hence $e \not\in X$, which proves that $X = \emptyset$.
	
	Now, let us prove that if there exists a subset $B \subseteq A$ such that xor of the hashes of its edges is the empty set, then $G'$ is disconnected.
	Let us color vertices of $G$ red and blue such that two vertices connected by a~tree edge of $\Tt$ have different colors if and only if this tree edge belongs to $B$.
	Since $\Tt$ is a tree, this coloring always exists, and is unique up to swapping the colors. $B$ has to contain at least one tree edge; otherwise xor of the hashes of its edges would clearly be nonempty.
	 Hence, in such a~coloring, there are vertices of both colors.
	 Let $L$ and $R$ denote the nonempty sets of vertices colored blue and red, respectively.
	 We claim that there is no edge in $E \setminus B$ connecting $L$ and $R$, which in turn will conclude the proof.
	 For the sake of contradiction, assume that such an edge $e \in E \setminus B$ exists.
	 It clearly cannot be a tree edge based on how we defined our coloring.
	 If $e$ is a non-tree edge connecting the vertices of different colors, it has to leap over an odd number of tree edges in $B$.
	 These are exactly the edges of $B \cap E(\Tt)$ that contain $e$ in their hashes.
	 Since xor of the hashes of the edges from $B$ is assumed to be the empty set, $e$ has to be contained in $B$ as well, because $e$ is the only non-tree edge that contains $e$ in its hash---a contradiction.
\end{proof}

Since our graph has no 1-edge-cuts, there are no edges whose hashes are empty sets; and since our graph has no 2-edge-cuts, no two edges have equal hashes.
Hence, removing a set of three edges disconnects a~3-edge-connected graph if and only if xor of the hashes of all of them is the empty set.
Moreover, after removing some 3-edge-cut, the graph disconnects into exactly two components, and no removed edge connects vertices within one component.

As storing hashes as sets of edges would lead to inefficient computations, we will define \textit{compressed hashes}. We express them as $b$-bit numbers, where $b = \ceil{3\log_2(m)}$, i.e., as a~function $CH : E \to \{0, \ldots, 2^b-1\}$. For each non-tree edge $e$, we draw $CH(e)$ randomly and uniformly from the set of $b$-bit numbers. For a tree edge $e$, we define its compressed hash $CH(e)$ as the xor of the compressed hashes of the edges in its hash, i.e. $CH(e) = \bigoplus_{f \in H(e)} CH(f)$. Note that since $b = \Oh(\log m)$, we can perform arithmetic operations on compressed hashes in constant time. However, this comes at a~cost of allowing the collisions of compressed hashes: it might happen (although with a~very low probability) that two sets of edges have equal xors of their compressed hashes, but unequal xors of their original hashes.

For the ease of exposition, in the description of this algorithm we will use hash tables, which only guarantee the expected average constant time complexity per lookup.
Because of that, the described algorithm will have expected linear instead of worst-case linear running time.
However, there is a way to remove the dependence on hash tables and keep the running time worst-case linear.
This will be explained at the end of this section (Subsection~\ref{ssec:removing_hash_tables}).
For the time being, we create a~hash table $M : \{0, \ldots, 2^b-1\} \to E \cup \{\perp\}$, which for any $b$-bit number $x$ returns an edge whose compressed hash is equal to $x$; or $\perp$ if no such edge exists. In the unlikely event that there are multiple edges whose compressed hashes are equal to $x$, $M$ returns any of them. 

We will now categorize 3-edge-cuts based on the number of tree edges they contain, and explain how to handle each case. Throughout the case analysis, we will use the following fact multiple times:

\begin{lemma}
  \label{lem:randomized_unique_remaining_edge}
  Given two edges $e$ and $f$ of some 3-edge-cut in a 3-edge-connected graph, the remaining edge is uniquely identified by its hash: $H(e) \oplus H(f)$.
  \begin{proof}
    If $\{e, f, g\}$ is a~3-edge-cut, then $H(e) \oplus H(f) \oplus H(g) = \varnothing$, so $H(g) = H(e) \oplus H(f)$.
  \end{proof}
\end{lemma}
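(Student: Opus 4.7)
The plan is to unpack the hypothesis via Lemma \ref{xoring-edges} and then use the $3$-edge-connectedness to pin down which subset has empty xor. Suppose $\{e,f,g\}$ is a $3$-edge-cut; removing it disconnects $G$, so by Lemma \ref{xoring-edges} some nonempty $B \subseteq \{e,f,g\}$ has xor of hashes equal to $\emptyset$.

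Next I would rule out $|B| \in \{1,2\}$. If $B = \{x\}$, then $H(x) = \emptyset$, which would mean no edge leaps over $x$ if $x$ is a tree edge (so $x$ alone is a bridge) or $x$ is a non-tree edge with empty hash (impossible by definition); either way this contradicts $2$-edge-connectedness, hence also $3$-edge-connectedness. If $B = \{x, y\}$, then $H(x) = H(y)$, but then $\{x, y\}$ itself would disconnect the graph by Lemma \ref{xoring-edges}, contradicting $3$-edge-connectedness. The observations that no edge has empty hash and no two edges share a hash are already noted in the text just before the lemma statement, so these steps are essentially immediate.

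Therefore $B = \{e, f, g\}$, yielding $H(e) \oplus H(f) \oplus H(g) = \emptyset$, i.e., $H(g) = H(e) \oplus H(f)$. For uniqueness, I would once more invoke the fact that distinct edges have distinct hashes in a $3$-edge-connected graph, so $g$ is the unique edge with this hash value.

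The statement is quite short and essentially a direct corollary of Lemma \ref{xoring-edges} together with the two ``hash'' observations preceding it; there is no real obstacle. The only thing to be mildly careful about is to explicitly invoke $3$-edge-connectedness twice (to exclude $|B|=1$ and $|B|=2$), rather than silently treating $B = \{e,f,g\}$ as automatic.
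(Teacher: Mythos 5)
Your proof is correct and takes essentially the same route as the paper's. The paper's one-line proof relies on the characterization stated in the text immediately before the lemma---that removing a set of three edges disconnects a $3$-edge-connected graph if and only if the xor of their hashes is empty---which you simply re-derive from Lemma~\ref{xoring-edges} by explicitly ruling out $|B|\in\{1,2\}$ via the same no-empty-hash and no-equal-hash observations, and you likewise invoke the no-equal-hash observation for uniqueness exactly as the paper implicitly does.
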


\subsection{Zero tree edges}\label{0t}
\label{ssec:randomized_0tree}
As $\Tt$ is a spanning connected subgraph, 3-edge-cuts not containing any tree edges simply do not exist.
\subsection{One tree edge}\label{1t}
\label{ssec:randomized_1tree}
Let $e$ be an edge of $\Tt$ that is the only tree edge of some 3-edge-cut. The two resulting connected components after the removal of such a cut are $\Tt_e$ and $\Tt \setminus \Tt_e$ (we are slightly abusing the notation here; we actually mean that the connected components are the subgraphs induced by the vertex sets of $\Tt_e$ and $\Tt \setminus \Tt_e$, respectively). Since $e$ is not a bridge, $\low(e)$ is well-defined and it connects $\Tt_e$ with $\Tt \setminus \Tt_e$, so it belongs to this cut as well. This uniquely determines the third edge of this cut as it has to be the unique edge whose hash is $H(e) \oplus H(\low(e))$.

In order to detect all such cuts, we iterate over all tree edges $e$ and for each of them we look up in $M$ whether there exists an edge whose compressed hash is $CH(e) \oplus CH(\low(e))$. If it exists and if it turns out to be a non-tree edge, then we call it $g$ and output the triple $\{e, \low(e), g\}$ as a 3-edge-cut.
\subsection{Two tree edges}\label{2t}

Let $e$ and $f$ be the edges of $\Tt$ that form a 3-edge-cut together with some back edge $g$.

\begin{lemma}
	The edges $e$ and $f$ are comparable with respect to $\le_{\Tt}$.
\end{lemma}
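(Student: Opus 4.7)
The plan is to argue by contradiction: assume that $e$ and $f$ are incomparable with respect to $\leq_{\Tt}$, and derive a contradiction from the existence of the back edge $g$ completing the cut. The key structural observation will be that if $e$ and $f$ are incomparable, then their subtrees $\Tt_e$ and $\Tt_f$ are vertex-disjoint, and removing only the tree edges $e$ and $f$ from $\Tt$ yields exactly three connected pieces: $V(\Tt_e)$, $V(\Tt_f)$, and the remainder $R \coloneqq V(G) \setminus (V(\Tt_e) \cup V(\Tt_f))$, which is also connected in $\Tt$ since it is obtained from a tree by removing two edges whose deeper endpoints belong to disjoint subtrees.

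Next I will use the fact that, with respect to a DFS tree, every non-tree edge connects a descendant to an ancestor. Because neither of $V(\Tt_e), V(\Tt_f)$ contains the other, no back edge of $G$ can have one endpoint in $V(\Tt_e)$ and the other in $V(\Tt_f)$. In particular, the final two connected components of $G - \{e, f, g\}$ must each be a union of some of the three vertex sets $V(\Tt_e), V(\Tt_f), R$ (we cannot split any of them because only tree edges were removed on their boundaries, and back edges live entirely within these sets or cross between $R$ and one of the subtrees).

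The hard part — which turns out to be a straightforward case analysis — is to rule out each of the three possible ways of grouping $V(\Tt_e), V(\Tt_f), R$ into two components:
\begin{itemize}
  \item If the two components are $V(\Tt_e) \cup V(\Tt_f)$ and $R$, then since there are no edges between $V(\Tt_e)$ and $V(\Tt_f)$ at all, the alleged component $V(\Tt_e) \cup V(\Tt_f)$ is itself disconnected, so $G - \{e,f,g\}$ has at least three components, contradicting $\{e,f,g\}$ being a $3$-edge-cut.
  \item If the two components are $V(\Tt_e)$ and $V(\Tt_f) \cup R$, then $f$, whose endpoints both lie in $V(\Tt_f) \cup R$, has both endpoints in the same component; hence $f$ cannot belong to the cut, contradicting the assumption.
  \item The case where the two components are $V(\Tt_f)$ and $V(\Tt_e) \cup R$ is symmetric and rules out $e$ being in the cut.
\end{itemize}
All three cases lead to a contradiction, which forces $e$ and $f$ to be comparable with respect to $\leq_{\Tt}$. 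No auxiliary lemmas beyond the definitions of DFS tree and of leaping back edges are needed; the whole argument is purely combinatorial on the tree decomposition.
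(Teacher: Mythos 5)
Your proof is correct, and it takes a somewhat different route than the paper's. Both arguments start from the same structural observation: if $e$ and $f$ were incomparable, then $\Tt_e$, $\Tt_f$, and the remainder $R$ are three disjoint connected pieces of $\Tt - \{e, f\}$, and no back edge can run between $\Tt_e$ and $\Tt_f$, so the components of $G - \{e,f,g\}$ must be unions of these pieces. From there you explicitly enumerate the three possible pairings and rule each out elementarily (one pairing yields a disconnected alleged component, the other two make one of $e$, $f$ internal to a component, contradicting minimality of a $3$-cut in a $3$-edge-connected graph). The paper instead goes directly to the pairing $\Tt_e \cup \Tt_f$ versus $R$ and derives a contradiction by counting crossing edges: $\low(e)$ and $\low(f)$ are well-defined and distinct (one originates in $\Tt_e$, the other in $\Tt_f$), and both cross from $\Tt_e \cup \Tt_f$ to $R$, so together with $e$ and $f$ any such cut would have at least four edges. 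Your version is more explicit about why the other two pairings cannot occur and avoids the $\low$ function entirely; the paper's is terser and reuses $\low$, which is already computed for the algorithm. Both are sound.
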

\begin{proof} Assume otherwise. Then $\Tt_e$ and $\Tt_f$ are disjoint, and $L \coloneqq \Tt_e \cup \Tt_f$ and $R \coloneqq \Tt \setminus L$ are two connected components resulting from the removal of the 3-edge-cut containing $e$ and $f$. However, $\low(e)$ and $\low(f)$ are well-defined different edges connecting $L$ and $R$, so any cut between $L$ and $R$ would need to contain $e, f, \low(e)$ and $\low(f)$ at the same time---a~contradiction. This proves that $e$ and $f$ are comparable.
	\end{proof}

Without loss of generality, assume that $e <_\Tt f$. Then, the two resulting connected components
 are $L \coloneqq \Tt_e \setminus \Tt_f$ and $R \coloneqq \Tt \setminus L$. The remaining back edge $g$ has to connect $L$ and $R$. We will distinguish two cases, differing in the location of the remaining back edge:
 \textit{two tree edges, lower case}, where $g$ connects $L$ with $\Tt_f$, and \textit{two tree edges, upper case}, where $g$ connects $L$ with $\Tt \setminus \Tt_e$.
 
\begin{figure}[h]
	\centering
	\includegraphics[scale=0.4]{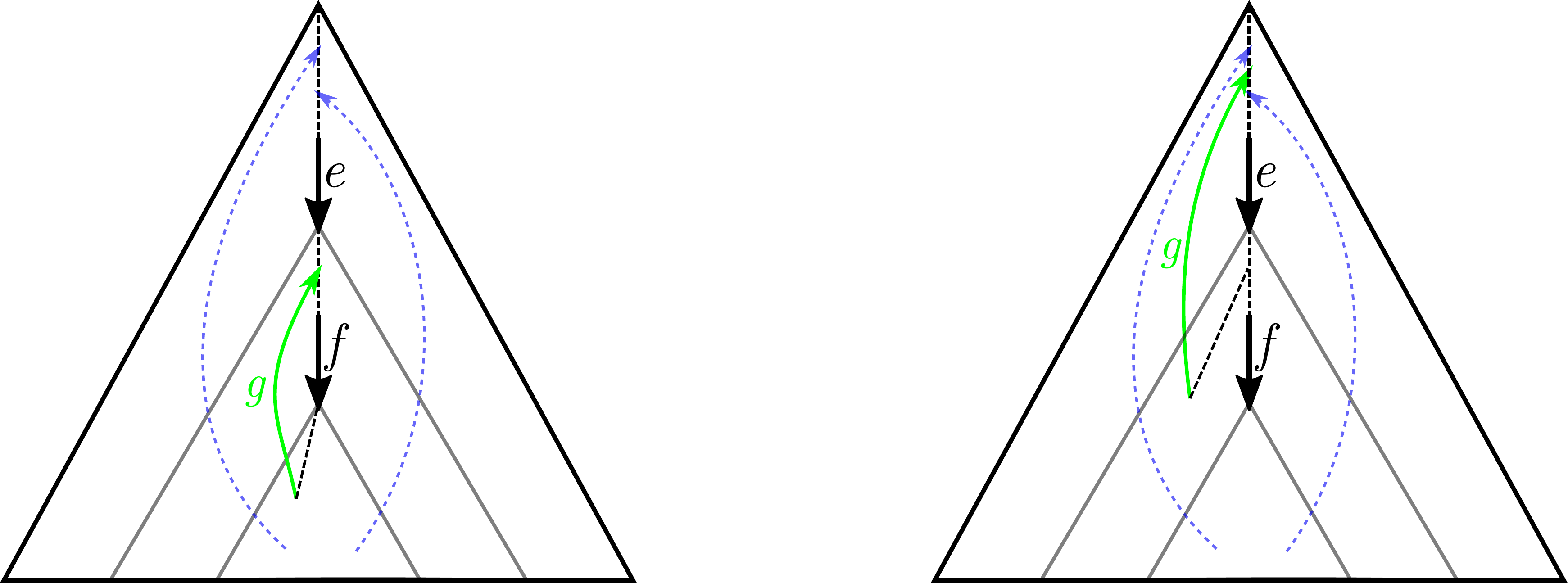}
	\caption{The settings in \ref{2tl} and \ref{2tu}. Cuts that we are looking for are formed by edges $\{e, f, g\}$ in both cases.}
	\label{fig:randomized-two-tree-edge-cases}
\end{figure} 
 
\subsubsection{Two tree edges, lower case}\label{2tl}
\label{ssec:randomized_2tree_lo}

Let $A$ be the set of back edges between $\Tt_f$ and $\Tt \setminus \Tt_f$. It consists of the edge $g$, which connects $L$ with $\Tt_f$, and of several edges connecting $\Tt_f$ with $\Tt \setminus \Tt_e$. Let $B$ be the set of heads of edges from $A$ (we remind that back edges are directed towards the root $r$). All elements of $B$ lie on the path from $f$ to the root $r$, and the head of $g$ is the deepest element of $B$.

Based on this observation, we are going to use the data structure from Theorem \ref{thm:min_edge_path}. We initialize an instance of it with the tree $\Tt$, $k=1$, $C=2n$ and paths $\{P_e \ | \ e$ is a back edge$\}$, where for each non-tree edge $e=xy$, where $y$ is its head, we create one input path $P_e$ from $x$ to $y$, and we set $w_e \coloneqq 2n - \pre(y)$. This data structure lets us for each tree edge $f$ determine the back edge leaping over $f$ with the biggest value of preorder of its head. We call this edge $\maxup(f)$. This back edge is the only candidate for the edge $g$ for a~fixed edge $f$.

Hence, we can find all such cuts by firstly initializing the data structure, and then iterating over all tree edges $f$.
For each $f$, we take $g=\maxup(f)$. Knowing $f$ and $g$, we can look up in $M$ whether there exists an edge whose compressed hash is $CH(f) \oplus CH(g)$ (Lemma~\ref{lem:randomized_unique_remaining_edge}). If it exists and if it is a tree edge, then we call it $e$ and output the triple $\{e, f, g\}$ as a 3-edge-cut.

\subsubsection{Two tree edges, upper case}\label{2tu}
\label{ssec:randomized_2tree_hi}

Let $A$ be the set of non-tree edges between $\Tt_e$ and $\Tt \setminus \Tt_e$. It consists of the back edge $g$, which connects $L$ with $\Tt \setminus \Tt_e$, and of several edges connecting $\Tt_f$ with $\Tt \setminus \Tt_e$. Let $B$ be the set of the tails of the edges from $A$.
Let $v \in L$ be the tail of $g$. As $v \notin \Tt_f$, either for all vertices $u \in \Tt_f$ it holds that $\pre(v) < \pre(u)$, or for all vertices $u \in \Tt_f$ it holds that $\pre(u) < \pre(v)$.
In the first case, $v$ is the vertex with the smallest preorder which is a~tail of some edge leaping over $e$; while in the second case $v$ is the analogous vertex with largest preorder.

Based on this observation, we are going to use the data structure from Theorem \ref{thm:min_edge_path} again.
We initialize one instance of it with the tree $\Tt$, $k=1$, $C=2n$ and the paths $\{P_e \,\mid\, e\text{ is a back edge}\}$, where for each back edge $e=xy$, we create one input path $P_e$ from $x$ to $y$, and set $w_e \coloneqq \pre(x)$.
We also initialize another instance of this data structure in the same way, with the only difference that we set $w_e \coloneqq 2n - \pre(x)$ instead.
The first instance lets us for each tree edge $e$ determine the edge $\mindn(e)$ leaping over $e$ with the smallest preorder of its tail; while the second instance determines the analogous edge $\maxdn(e)$, only with the largest preorder.
By our considerations above, if any~desired $\{e, f, g\}$ cut exists, then $g \in \{\mindn(e), \maxdn(e)\}$.

Hence, we can find all such cuts by firstly initializing both instances of the data structure. Then, we iterate over all tree edges $e$, and for each $e$ we check both candidates for $g$. If our hash table contains an~edge whose compressed hash is $CH(e) \oplus CH(g)$ and it is a tree edge, then we call it $f$ and output the triple $\{e, f, g\}$ as a 3-edge-cut.

\subsection{Three tree edges}\label{3t}
\label{ssec:randomized_3tree}

Solving this case in a way similar to the previous cases seems intractable. We consider this subsection, together with the time analysis following it, as one of the key ideas of this work.

In this case, we assume that no non-tree edges belong to the cut.
Therefore, we can contract all of them simultaneously and recursively list all 3-edge-cuts in the resulting graph.
Since 3-edge-cuts in the contracted graph exactly correspond to the 3-edge-cuts consisting solely of tree edges in the original graph, this reduction is sound.
The contraction is performed in $\Oh(n + m)$ time by identifying the vertices within the connected components of $G - E(\Tt)$. Let $G'$ be the graph after these contractions. Since $G$ is 3-edge-connected, $G'$ has to be 3-edge connected as well, so the assumption about the input graph being 3-edge-connected is preserved. We do not modify the value of $b$ in the subsequent recursive calls, even though the value of $m$ will decrease.

\subsection{Time analysis}
\label{ssec:randomized_time_analysis}
We will now compute the expected time complexity of our algorithm. The subroutines detecting the cuts with at most two tree edges clearly take expected linear time. The only nontrivial part of the time analysis is the recursion in Subsection \ref{3t}. Let $T(m)$ denote the maximum expected time our algorithm needs to solve any graph with at most $m$ edges.
Since our graph is 3-edge-connected, the degree of each vertex in $G$ is at least three, so $|E(G)| \ge \frac{3}{2} |V(G)| \ge \frac{3}{2} |E(G')| \Rightarrow |E(G')| \le \frac{2}{3} |E(G)|$. The bound on $T(m)$ follows:
$$T(m) \le \Oh(m) + T\left(\frac{2}{3}m\right).$$
The solution to this recurrence is $T(m) = \Oh(m)$, hence the whole algorithm runs in expected linear time.

\subsection{Correctness analysis}
In this subsection, we are going to prove that our algorithm works with sufficiently high probability. The only reason it can output wrong result comes from the compression of hashes---if we used their uncompressed version instead, the algorithm would clearly be correct.


The maximum number of queries to our hash table is linear in terms of $n$, which follows from a similar argument to the one presented in Subsection~\ref{ssec:randomized_time_analysis}.
Hence, there exists some absolute constant $c$ such that the number of queries is bounded by $cn$.
For each query with value $q$, and for each edge whose hash is not equal to the hash whose compressed version we ask about, there is $2^{-b}$ probability that compressed version of this hash is equal to $q$.
Hence, the probability that we ever get a~false positive is bounded from above by $cnm2^{-b}$.
If we never get a~false positive, then our algorithm returns the correct output. Since $b = \ceil{3 \log_2(m)}$ and $n \le m$, we infer that $cnm2^{-b} \le \frac{c}{m}$.
Therefore, our algorithm works correctly with probability at least $1 - \frac{c}{m}$.

\subsection{Removing hash tables}
\label{ssec:removing_hash_tables}
As mentioned earlier, there is a way to avoid using hash tables in our algorithm. Since the values of compressed hashes of edges are polynomial in $m$ (at most $2m^3$), we are able to sort them in $\Oh(m)$ time using the radix sort.
Now, instead of requiring a data structure that allows us to query the existence of an edge with a particular value of its compressed hash, we create an object representing such query, which we are going to answer in the future in an offline manner. After gathering all such objects, we may sort them together with edges, where as keys for comparisons we use values of compressed hashes for edges and values of required compressed hashes for queries. All the keys for comparisons are polynomial, so we are able to sort them together in $\Oh(m)$ time using the radix sort. After the sorting, we are able to easily answer all queries offline.

This concludes the description of the randomized variant of the algorithm.

\section{Deterministic algorithm}
\label{sec:deterministic_3cut}

Having established a~linear time randomized algorithm producing 3-edge-cuts in $3$-edge-connected graphs, we will now determinize it by designing deterministic implementations of the subroutines for each of the cases considered in the randomized variant of the algorithm.
Three cases need to be derandomized: ``one tree edge'' (Subsection~\ref{ssec:randomized_1tree}), ``two tree edges, lower case'' (Subsection~\ref{ssec:randomized_2tree_lo}), and ``two tree edges, upper case'' (Subsection~\ref{ssec:randomized_2tree_hi}).
In the following description, we cannot use compressed hashes anymore: the compression is a~random process which inevitably results in false positives.
Instead, we will exploit additional properties of $3$-edge-cuts in order to produce an~efficient deterministic implementation of the algorithm.

Recall that in the description of the randomized implementation of the algorithm, we defined the values $\low(e)$, $\maxup(e)$, $\mindn(e)$, and $\maxdn(e)$ for any tree edge $e$.
The values $\low(e)$ and $\maxup(e)$ were defined as the back edges $x = uv$ leaping over $e$ whose head $v$ has the smallest possible preorder in~$\Tt$, or the largest possible preorder in~$\Tt$, respectively, breaking ties arbitrarily.
The values $\mindn(e)$ and $\maxdn(e)$ were defined analogously, only that we chose edges with the smallest possible preorder of the source $u$, and the largest possible preorder of $u$, respectively.

For the deterministic variant of the algorithm, we generalize these notions: we define $\low[1](e)$, $\low[2](e)$, and $\low[3](e)$ as the three back edges leaping over $e$ with the minimal preorders of their targets; in particular, we set $\low[1](e) := \low(e)$.
We analogously define $\maxup[1](e)$, $\maxup[2](e)$, $\mindn[1](e)$, $\mindn[2](e)$, $\maxdn[1](e)$, and $\maxdn[2](e)$.
We remark that $\low[3](e)$ might not exist if there are fewer than three edges leaping over $e$; in this case, we put $\low[3](e) := \bot$.
However, all the other values must exist---otherwise, at most one edge would leap over $e$, which would mean that this edge, together with $e$, would form a~$2$-edge-cut of $G$ of cardinality~$2$.

All the values defined above can be computed for every tree edge $e$ in linear time with respect to the size of $G$: the values $\low[1](e)$, $\low[2](e)$, and $\low[3](e)$ are computed in a single depth-first search pass along $\Tt$ in the same way as the original low function is computed,
only that three lowest back edges are computed instead of just one.
The generalizations of $\maxup$, $\mindn$, and $\maxdn$ are determined in the same way as in Section~\ref{sec:randomized_3cut}, but $k = 2$ is passed to the algorithm from Theorem~\ref{thm:min_edge_path} instead of $k = 1$ so that two best back edges of each kind are computed instead of just one.
Hence, in the following description, we will assume that all the values above have already been computed.

\subsection{One tree edge}
\label{ssec:deterministic_1edge}

Recall that in this case, we are to find all $3$-edge-cuts intersecting a~fixed depth-first search tree $\Tt$ in a~single edge.
This is fairly straightforward: if some $3$-edge-cut $C$ contains exactly one tree edge $e$ of $\Tt$, then the border of the cut is exactly $\Tt_e$; hence, this cut must include all back edges leaping over $e$.
Therefore, $e$ belongs to a~$3$-edge-cut of this kind if and only if $\low[3](e) = \bot$, i.e. if there only exist two back edges leaping over $e$; in this case, $e$ and these two edges form a $3$-edge-cut.
Since this check can be easily performed in $\Oh(n)$ time for all tree edges in $\Tt$, the whole subroutine runs in $\Oh(n + m)$ time complexity.

\subsection{Two tree edges, lower case}
\label{ssec:deterministic_2edges_down}

Recall that this case requires us to find all $3$-edge-cuts intersecting a~fixed depth-first search tree $\Tt$ in two edges, say $e$ and $f$, such that $e < f$ and the remaining back edge $g$ connects $\Tt_f$ with $\Tt_e \setminus \Tt_f$ (Figure \ref{fig:randomized-two-tree-edge-cases}).
Hence, $g$ is the only back edge connecting $\Tt_f$ with $\Tt_e \setminus \Tt_f$, no back edges connect $\Tt_e \setminus \Tt_f$ with $\Tt \setminus \Tt_e$, but there may be multiple back edges connecting $\Tt_f$ with $\Tt \setminus \Tt_e$.

We shall exploit the fact that in a~valid $3$-edge-cut of this kind, there are no back edges originating from $\Tt_e \setminus \Tt_f$ and leaping over $e$.
For a~fixed edge $e$, this severely limits the set of possible tree edges $f$ below $e$ in $\Tt$, since all back edges leaping over $e$ must originate from $\Tt_f$.
This is formalized by the notion of the \emph{deepest down cut} of~$e$:

\begin{definition}[deepest down cut]
  \label{def:deterministic_deepest_cut}
  For a~tree edge $e$ in $\Tt$, we define the \emph{deepest down cut} of~$e$, denoted $\DeepestDnCut(e)$, as the deepest tree edge $f \geq e$ for which all back edges leaping over $e$ originate from $\Tt_f$.
\end{definition}

\begin{lemma}
  \label{lem:deterministic_deepest_cut_def}
  For every tree edge $e$, $\DeepestDnCut(e)$ is defined correctly and uniquely. Moreover, every $3$-edge cut $\{e, f, g\}$ of the considered kind satisfies $e < f \leq \DeepestDnCut(e)$.
  \begin{proof}
    Let $\Ss_e$ be the set of the tails of all the edges leaping over~$e$.
    Then, $\DeepestDnCut(e)$ is the deepest tree edge $f=uv$ such that $\Ss_e \subseteq V(\Tt_v)$.
    Equivalently, $v$ must be an ancestor of all the vertices in $\Ss_e$; hence, the deepest such vertex, the \emph{lowest common ancestor} of $\Ss_e$, is defined uniquely.
    The correctness of the definition of $\DeepestDnCut(e)$ follows.
    For the latter part of the lemma, observe that $f \leq \DeepestDnCut(e)$ if and only if every edge leaping over~$e$ originates from $\Tt_f$.
  \end{proof}
\end{lemma}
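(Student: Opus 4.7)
The plan is to give an explicit description of $\DeepestDnCut(e)$ in terms of the lowest common ancestor of the tails of back edges leaping over $e$, and then to use the connectivity structure of a lower-case $3$-edge-cut to identify $f$ as a candidate for this deepest-down-cut.

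First I would fix a tree edge $e$ with deeper endpoint $w$ (so $\Tt_e = \Tt_w$), and define $\Ss_e$ as the set of tails of back edges leaping over $e$. Every such tail lies in $V(\Tt_w)$, and $\Ss_e$ is nonempty because the graph is $3$-edge-connected, so at least two back edges leap over $e$. For any tree edge $f \geq e$ with deeper endpoint $v$, all edges leaping over $e$ originate from $\Tt_f = \Tt_v$ iff $v$ is a common ancestor of $\Ss_e$. Since $v$ must also be a descendant of (or equal to) $w$, the set of valid $v$ is exactly the ancestors of $\LCA(\Ss_e)$ lying in $\Tt_w$; the deepest such $v$ is $\LCA(\Ss_e)$ itself, and the corresponding tree edge is unique. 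This establishes that $\DeepestDnCut(e)$ is defined correctly and uniquely.

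Next I would prove the inclusion $f \leq \DeepestDnCut(e)$ for a lower-case cut $\{e, f, g\}$ with $e < f$. By definition of the lower case, removing $\{e,f,g\}$ splits $V(G)$ into $L = V(\Tt_e) \setminus V(\Tt_f)$ and $R = V(G) \setminus L$, and $g$ connects $L$ to $V(\Tt_f) \subseteq R$. Since $\{e,f,g\}$ is a~cut, no other edge crosses between $L$ and $R$; in particular, no back edge has its tail in $L$ and its head in $V(\Tt) \setminus V(\Tt_e) \subseteq R$. But any back edge leaping over $e$ has head outside $\Tt_e$, so its tail cannot lie in $L$, i.e.\ its tail must lie in $V(\Tt_f)$. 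Hence $\Ss_e \subseteq V(\Tt_f)$, so $f$ itself is among the candidates in Definition~\ref{def:deterministic_deepest_cut}, giving $f \leq \DeepestDnCut(e)$ by maximality, and combining with the assumption $e < f$ concludes the proof.

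The argument is essentially bookkeeping of which side of the cut each endpoint lies on, so I do not expect a serious obstacle; the only subtlety is checking that the unique deepest candidate indeed exists, which reduces to observing that $\Ss_e$ is nonempty and that the LCA of a nonempty set of descendants of $w$ is itself a descendant of $w$ (or equal to $w$), so that a corresponding tree edge $f \geq e$ is always available.
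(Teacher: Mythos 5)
Your proof is correct and follows essentially the same route as the paper's: you introduce the same set $\Ss_e$ of tails of back edges leaping over~$e$, identify $\DeepestDnCut(e)$ as the tree edge whose deeper endpoint is $\LCA(\Ss_e)$, and then show $\Ss_e \subseteq V(\Tt_f)$ for a lower-case cut to get $f \leq \DeepestDnCut(e)$. The only difference is that you spell out the cut-crossing argument (that a back edge leaping over~$e$ with tail in $L$ would cross between $L$ and $R$) which the paper only states informally before Definition~\ref{def:deterministic_deepest_cut} and does not re-derive inside the lemma's proof, and you also note nonemptiness of $\Ss_e$; both are harmless additions.
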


\begin{lemma}
  \label{lem:deterministic_deepest_cut_lca}
  For every tree edge $e$, $\DeepestDnCut(e) = uv$ is the tree edge whose head $v$ is the lowest common ancestor of two vertices: the tails of $\mindn[1](e)$ and $\maxdn[1](e)$.
  \begin{proof}
  Let $\Ss_e$ be the set defined in the proof of Lemma~\ref{lem:deterministic_deepest_cut_def}.
Then, $\DeepestDnCut(e) = uv$ is the tree edge whose head $v$ is the lowest common ancestor of $\Ss_e$.
  Equivalently, $v$ is the lowest common ancestor of two vertices: one of minimum preorder in $\Ss_e$, and the other of maximum preorder in $\Ss_e$.
  By definition, those two vertices are exactly the tails of $\mindn[1](e)$ and $\maxdn[1](e)$, respectively.
  \end{proof}
\end{lemma}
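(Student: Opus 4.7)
The plan is to chain two observations together. First, I would appeal directly to the work already done in the proof of Lemma~\ref{lem:deterministic_deepest_cut_def}, where $\DeepestDnCut(e) = uv$ was characterized as the tree edge whose head $v$ is the lowest common ancestor of $\Ss_e$, the set of tails of all back edges leaping over $e$. So it suffices to show that $\LCA(\Ss_e)$ coincides with $\LCA(a,b)$, where $a$ and $b$ are the tails of $\mindn[1](e)$ and $\maxdn[1](e)$, respectively.

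Next, I would invoke the standard DFS-tree fact that the LCA of a nonempty vertex set $S$ equals the LCA of any two members attaining, respectively, the minimum and maximum preorder in $S$. A short justification suffices: if $w = \LCA(a,b)$ with $a$ of minimum and $b$ of maximum preorder in $S$, then for every $c \in S$ we have $\pre(w) \le \pre(a) \le \pre(c) \le \pre(b) < \post(w)$, which by the preorder/postorder characterization of ancestry recalled in the Preliminaries yields $c \in \Tt_w$; conversely $\LCA(S)$ is an ancestor of $\LCA(\{a,b\})$ since $\{a,b\} \subseteq S$. So the two LCAs agree.

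To close the argument, I would observe that by their very definitions the tails of $\mindn[1](e)$ and $\maxdn[1](e)$ are exactly the vertices of $\Ss_e$ with smallest and largest preorder, so combining with the previous paragraph identifies $v$ with $\LCA(a,b)$ as required. The one step that could conceivably trip a reader is the LCA-by-extremes identity, but it falls out immediately from the interval characterization of ancestry and requires no additional machinery beyond what has already been set up.
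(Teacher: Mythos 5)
Your proposal is correct and follows exactly the same route as the paper's proof: characterize $\DeepestDnCut(e)$ via $\LCA(\Ss_e)$, reduce to the LCA of the preorder-extremal tails, and identify those with $\mindn[1](e)$ and $\maxdn[1](e)$. The only difference is that you spell out the preorder/postorder interval justification for the LCA-by-extremes identity, which the paper treats as folklore.
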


Thanks to Lemma~\ref{lem:deterministic_deepest_cut_lca}, we can compute $\DeepestDnCut(e)$ in constant time for each tree edge $e$ as the lowest common ancestor of two vertices can be computed in constant time after linear preprocessing~\cite{DBLP:journals/siamcomp/HarelT84}.

We now shift our focus to the lower tree edge $f$.
As in Section~\ref{sec:randomized_3cut}, the only possible candidate $g$ for a~back edge of the cut leaping over $f$ is given by $\maxup[1](f)$.
The only problematic part is locating the remaining tree edge $e$.
Previously, we utilized randomness in order to calculate the compressed hash of $e$ given the compressed hashes of $f$ and $g$.
Here, we instead use the following fact:

\begin{lemma}
  \label{lem:deterministic_deepest_edge_only}
  If $\{e, f, g\}$ is a~$3$-edge-cut of the considered kind for some tree edges $e$ and $f > e$, then $e$ is the deepest tree edge satisfying $\DeepestDnCut(e) \geq f$.
  \begin{proof}
    Naturally, the condition $\DeepestDnCut(e) \geq f$ must be satisfied if $\{e, f, g\}$ is a~$3$-edge-cut (Lemma~\ref{lem:deterministic_deepest_cut_def}).
    It only remains to prove that $e$ is the deepest edge with this property.
    Pick two tree edges $e_1$, $e_2$ such that $e_1 < e_2 < f$, both satisfying the condition regarding their deepest down cuts. We then partition the tree into four parts: $\Tt \setminus \Tt_{e_1}$, $\Tt_{e_1} \setminus \Tt_{e_2}$, $\Tt_{e_2} \setminus \Tt_f$, and $\Tt_f$. By the assumptions, all back edges leaping over $e_1$ or over $e_2$ must originate from $\Tt_f$ (Figure~\ref{fig:deterministic_only_deepest_dn_cut}).

    Since $G$ is $3$-edge-connected, $\{e_2, f\}$ cannot be its edge cut; there must exist a~back edge $x_1$ connecting $\Tt_{e_2} \setminus \Tt_f$ with the rest of the graph.
    It cannot leap over $e_2$ as it would also need to originate from $\Tt_f$; hence, $x_1$ connects $\Tt_f$ with $\Tt_{e_2} \setminus \Tt_f$.
    Analogously, $\{e_1, e_2\}$ is not an~edge cut of $G$, from which we infer that there exists another back edge $x_2$ connecting $\Tt_f$ with $\Tt_{e_1} \setminus \Tt_{e_2}$.
    Since we produced two separate back edges $x_1$, $x_2$ connecting $\Tt_f$ with $\Tt_{e_1} \setminus \Tt_f$, there cannot exist any $3$-edge-cut containing both $e_1$ and $f$.
    The statement of the lemma follows.
  \begin{figure}[h]
    \centering
    \includegraphics[scale=0.4]{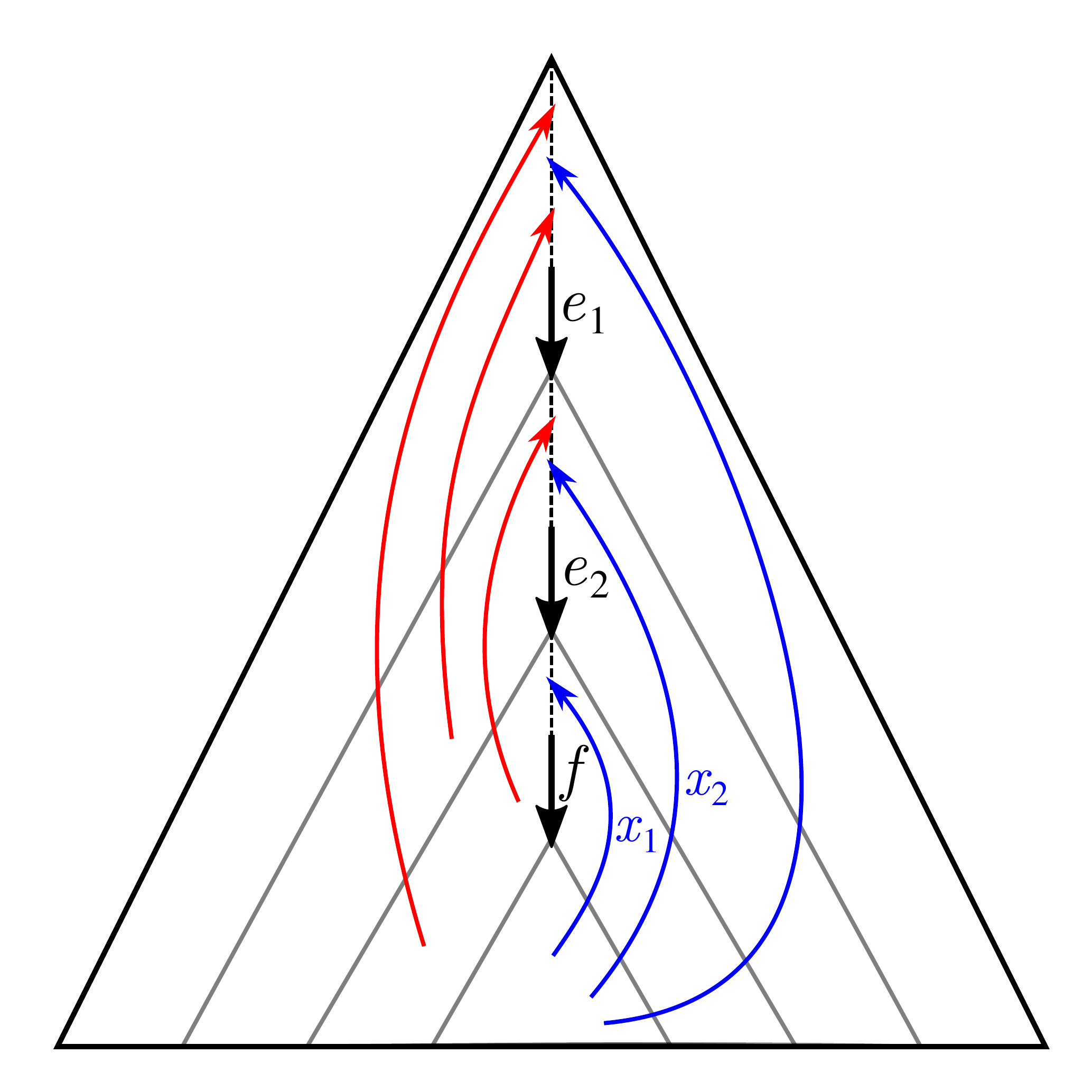}
    \caption{The setting in Lemma~\ref{lem:deterministic_deepest_edge_only}. Given that $\DeepestDnCut(e_1) \geq f$ and $\DeepestDnCut(e_2) \geq f$, the red back edges are forbidden from occurring in $G$, while the blue back edges can appear in $G$. We prove that both $x_1$ and $x_2$ appear in the graph.}
    \label{fig:deterministic_only_deepest_dn_cut}
  \end{figure}
  \end{proof}
\end{lemma}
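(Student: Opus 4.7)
\smallskip

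The plan is to first establish that the edge $e$ from the given 3-edge-cut $\{e,f,g\}$ itself satisfies $\DeepestDnCut(e)\ge f$, and then to assume for contradiction the existence of a strictly deeper $e_2$ with $e<e_2<f$ and $\DeepestDnCut(e_2)\ge f$, and derive that too many edges must cross the partition induced by the cut.

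For the first step, I would use the cut structure directly: removing $\{e,f,g\}$ yields components $L=\Tt_e\setminus\Tt_f$ and $R=\Tt\setminus L$, and since $g$ is the only back edge in the cut and it connects $\Tt_f\subseteq R$ with $L$, any back edge leaping over $e$ (hence with one endpoint in $\Tt_e$ and one outside) must have both endpoints in $R$, i.e.\ its tail lies in $\Tt_f$. This gives $\DeepestDnCut(e)\ge f$ from the definition.

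For the contradiction step, set $e_1:=e$ and partition $V(\Tt)$ into four parts $V_0=V(\Tt\setminus\Tt_{e_1})$, $V_1=V(\Tt_{e_1}\setminus\Tt_{e_2})$, $V_2=V(\Tt_{e_2}\setminus\Tt_f)$, $V_3=V(\Tt_f)$. By the assumption $\DeepestDnCut(e_1)\ge f$ no back edge leaps over $e_1$ with tail in $V_1\cup V_2$ (as the tail would have to lie in $V_3$), so no back edge joins $V_0$ with $V_1$ or with $V_2$; analogously, $\DeepestDnCut(e_2)\ge f$ rules out back edges between $V_1$ and $V_2$ (through the forbidden leap over $e_2$ with tail in $V_2$). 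I would then invoke $3$-edge-connectedness twice: since $\{e_1,e_2\}$ is not a $2$-edge-cut, a back edge $x_2$ must connect $V_1$ to the rest, and by the exclusions above it must in fact connect $V_1$ to $V_3$; similarly, since $\{e_2,f\}$ is not a $2$-edge-cut, a back edge $x_1$ must connect $V_2$ to $V_3$.

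To finish, observe that $x_1\ne x_2$ because their endpoints in $L=V_1\cup V_2$ lie in the disjoint parts $V_2$ and $V_1$, respectively, so they are two distinct back edges between $L$ and $R\supseteq V_3$. Together with the tree edges $e_1$ (crossing $V_0$--$V_1$) and $f$ (crossing $V_2$--$V_3$), this gives at least four edges of $G$ between $L$ and $R$, contradicting the assumption that $\{e_1,f,g\}$ is a $3$-edge-cut. The main subtlety I expect here is the bookkeeping of which inter-part back edges are forbidden by the two $\DeepestDnCut$ hypotheses (in particular making sure $x_1$ and $x_2$ cannot escape into $V_0$ without contradicting the forbidden-leap analysis); once this is laid out cleanly, the counting argument closes the proof.
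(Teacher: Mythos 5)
Your proof is correct and follows essentially the same route as the paper: both use the four-part partition of $\Tt$ into $\Tt\setminus\Tt_{e_1}$, $\Tt_{e_1}\setminus\Tt_{e_2}$, $\Tt_{e_2}\setminus\Tt_f$, $\Tt_f$, rule out the same inter-part back edges from the two $\DeepestDnCut$ hypotheses, and invoke $3$-edge-connectedness on $\{e_1,e_2\}$ and $\{e_2,f\}$ to force two distinct back edges from $V_1$ and $V_2$ into $\Tt_f$, yielding a contradiction. The only cosmetic differences are that you reprove $\DeepestDnCut(e)\ge f$ directly instead of citing Lemma~\ref{lem:deterministic_deepest_cut_def}, and you make the concluding count of four distinct $L$--$R$ edges explicit where the paper leaves it implicit.
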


Lemmas~\ref{lem:deterministic_deepest_cut_def} and~\ref{lem:deterministic_deepest_edge_only} naturally lead to the following idea: given a~tree edge $e$, the set of edges $f$ satisfying $\DeepestDnCut(e) \geq f$ is a~path $P_e$ connecting the head of $e$ with the head of $\DeepestDnCut(e)$; we assign this path a~weight $w_e$ equal to the depth of $e$ in~$\Tt$.
We then invoke Theorem~\ref{thm:min_edge_path} with the tree $\Tt$, the weighted paths $\{P_e\,\mid\,e \in E(\Tt)\}$, and $k = 1$.
This allows us to find, in $\Oh(n)$ time, for each tree edge $f \in E(\Tt)$, the path $P_e$ of the maximum weight containing $f$ as an~edge.
This path naturally corresponds to the tree edge~$e$ from Lemma~\ref{lem:deterministic_deepest_edge_only}.
This way, for every tree edge $f$, we have uniquely identified a~back edge $g$ and a~tree edge $e$ such that the only possible $3$-edge-cut containing $f$ as a~deeper tree edge, if it exists, is $\{e, f, g\}$.

It only remains to verify that $\{e, f, g\}$ is a $3$-edge-cut.
We need to make sure that:
\begin{itemize}
  \item All back edges leaping over $e$ originate from $\Tt_f$.
    The equivalent condition $\DeepestDnCut(e) \geq f$ is guaranteed by the algorithm, so we do not need to check it again.
  \item Exactly one edge ($g$) connects $\Tt_f$ with $\Tt_e \setminus \Tt_f$.
    Since $g = \maxup[1](f)$, it suffices to verify that the edge $\maxup[2](f)$, which is a~back edge leaping over $f$ whose head is the deepest apart from $g$, also leaps over $e$ (i.e., it does not terminate in $\Tt_e$).
\end{itemize}

It can be easily verified that the implementation of the subroutine is deterministic and runs in linear time with respect to the size of $G$.

\subsection{Two tree edges, upper case}
\label{ssec:deterministic_2edges_up}

Recall that in this case, we are required to find all $3$-edge-cuts intersecting $\Tt$ in two edges $e$ and $f$, such that $e < f$ and the remaining back edge $g$ connects $\Tt_e \setminus \Tt_f$ with $\Tt \setminus \Tt_e$ (Figure \ref{fig:randomized-two-tree-edge-cases}). 
Similarly to the randomized case, we use the fact that given a~tree edge $e$, the tail of $g$ has either the smallest preorder (i.e., $g = \mindn[1](e)$) or the largest preorder (i.e., $g = \maxdn[1](e)$) among all the back edges leaping over $e$.
Without loss of generality, assume that $g = \mindn[1](e)$; the latter case is analogous.

In a~similar vein to previous case, observe that if $\{e, f, g\}$ is a~$3$-edge-cut for some $f > e$, then all back edges leaping over $e$ other than $g$ must originate from $\Tt_f$.

This leads to the following slight generalization of $\DeepestDnCut$ (Definition~\ref{def:deterministic_deepest_cut}):
\begin{definition}
  \label{def:deterministic_deepest_nomin_cut}
  For a~tree edge $e$ in $\Tt$, we define the value $\DeepestDnCutNoMin(e)$ as the deepest tree edge $f \geq e$ for which all back edges leaping over $e$ other than $\mindn[1](e)$ originate from $\Tt_f$.
\end{definition}
The process of computation of $\DeepestDnCut(e)$ asserted by Lemma~\ref{lem:deterministic_deepest_cut_lca} can be easily modified to match our needs: recall that $\DeepestDnCut(e)$ is the tree edge whose head was the lowest common ancestor of the origins of $\mindn[1](e)$ and $\maxdn[1](e)$.
Since we exclude $\mindn[1](e)$ from the condition in Definition~\ref{def:deterministic_deepest_nomin_cut}, we use the edge $\mindn[2](e)$ instead: that is, the back edge leaping over $e$, with the next lowest preorder of the tail.
Therefore, we compute $\DeepestDnCutNoMin(e)$ as the tree edge whose head is the lowest common ancestor of the tails of $\mindn[2](e)$ and $\maxdn[1](e)$ (Figure~\ref{fig:deterministic_upcase_f0}).

\begin{figure}[h]
\centering
  \includegraphics[scale=0.4]{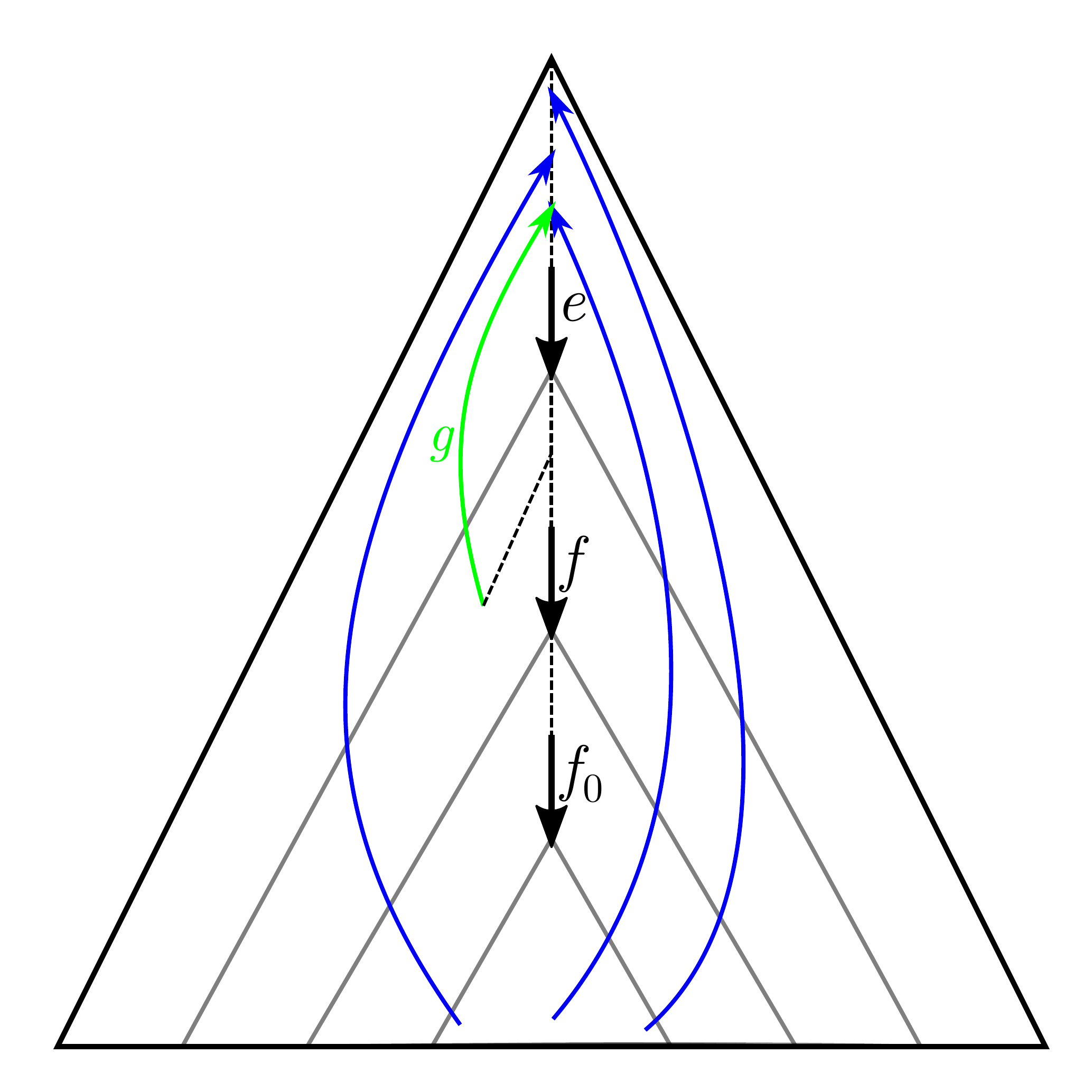}
\caption{The computation of $f_0 = \DeepestDnCutNoMin(e)$ excludes $g = \mindn[1](e)$, and computes the lowest common ancestor of the remaining back edges leaping over $e$.}
\label{fig:deterministic_upcase_f0}
\end{figure}

Now, fix the shallower tree edge $e$ of the cut.
Let $f_0 := \DeepestDnCutNoMin(e)$.
Then, if some tree edge $f > e$ belongs to the $3$-edge-cut $\{e, f, g\}$, then $f \leq f_0$ (otherwise, there would be multiple edges connecting $\Tt_e \setminus \Tt_f$ with $\Tt \setminus \Tt_e$).
The natural question is then: is $\{e, f_0, g\}$ a~$3$-edge-cut?
The only possible problem is that there may exist back edges connecting $\Tt_{f_0}$ with $\Tt_e \setminus \Tt_{f_0}$.
Fortunately, if this is the case, then the back edge $\maxup1(f_0)$, dependent only on $f_0$, is one of these back edges.

Hence, let $h_0 := \maxup[1](f_0)$.
If $h_0$ leaps over $e$, we are done, and $\{e, f_0, g\}$ is an~edge cut.
Otherwise, the subtree $\Tt_f$ for the sought $3$-edge-cut $\{e, f, g\}$ must contain the head of $h_0$ (or else $h_0$ would connect $\Tt_f$ with $\Tt_e \setminus \Tt_f$).
Let then $f_1$, $e \leq f_1 < f_0$ be the deepest tree edge containing the head of $h_0$, that is, the edge whose head coincides with the head of $h_0$.
Then, the edge $f$ of the $3$-edge-cut $\{e, f, g\}$ must satisfy $e < f \leq f_1$.
We can then repeat this procedure: given $f_1$, we compute $h_1 := \maxup[1](f_1)$, and either $h_1$ leaps over $e$ and we are done, or we calculate another edge $f_2 < f_1$ supplying a~better bound on the depth of $f$ (Figure~\ref{fig:deterministic_upcase_iter}).
Since the graph is finite, this process terminates in $k = \Oh(n)$ steps, producing the lowest tree edge $f_k \geq e$ such that no back edge connects $\Tt_{f_k}$ with $\Tt_e \setminus \Tt_{f_k}$, and no back edge other than $g$ connects $\Tt_e \setminus \Tt_{f_k}$ with $\Tt \setminus \Tt_e$.
If $f_k = e$, then no $3$-edge-cut $\{e, f, g\}$ exists for $f > e$.
Otherwise, $\{e, f_k, g\}$ is naturally a~correct $3$-edge-cut.
Since two edges of a $3$-edge-cut uniquely identify the third edge (Lemma~\ref{lem:randomized_unique_remaining_edge}), we conclude that this is the only $3$-edge-cut containing $e$ and $g$.

\begin{figure}[h]
\centering
  \includegraphics[scale=0.4]{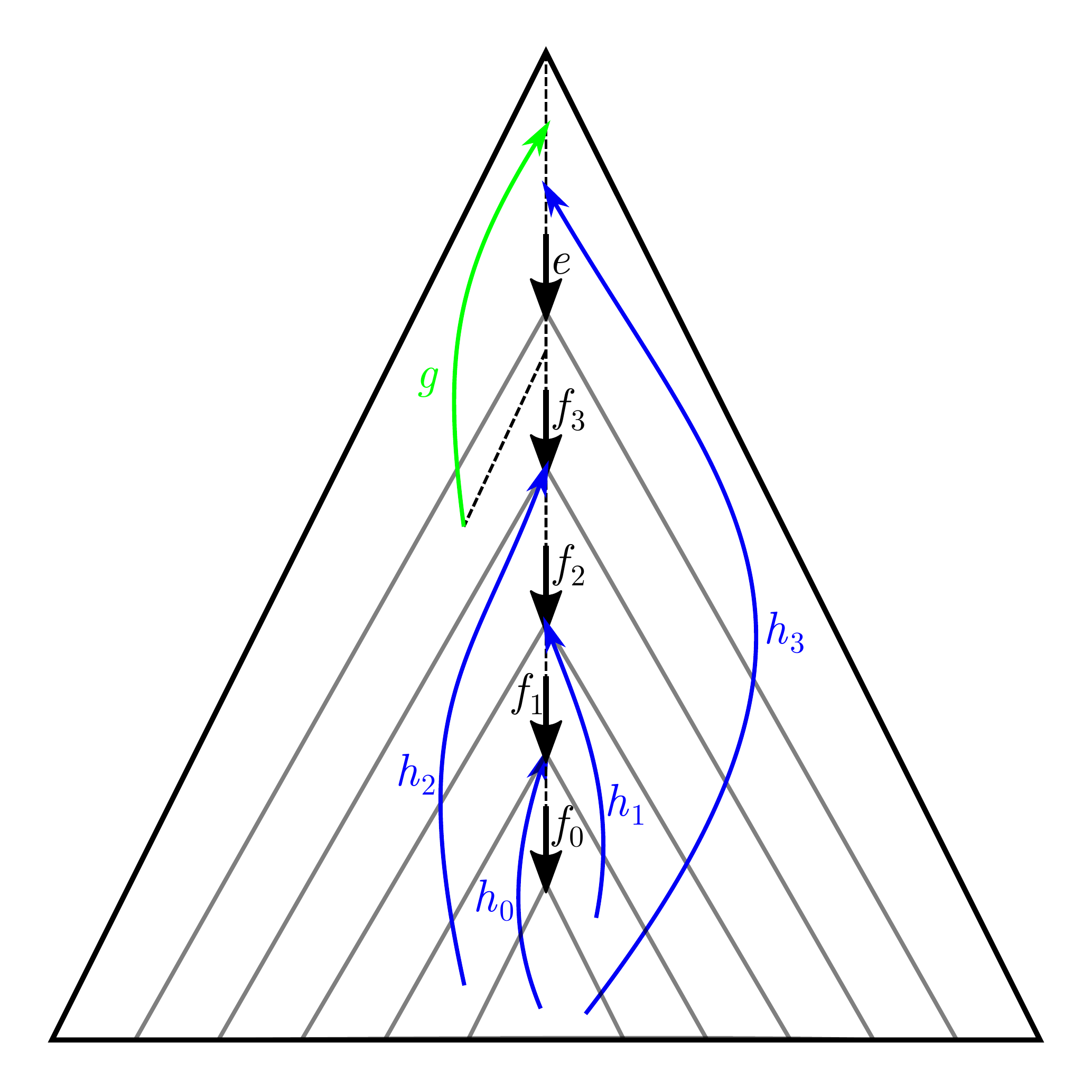}
\caption{The main iteration of the algorithm. We set $f_0 = \DeepestDnCutNoMin(e)$, $h_i = \maxup[1](f_i)$, and $f_{i+1}$ is the tree edge with the same head as $h_i$.}
\label{fig:deterministic_upcase_iter}
\end{figure}

We can now sum up the inefficient deterministic implementation of the algorithm for the current case (Algorithm~\ref{alg:deterministic_2edges_up_bad}).

\begin{algorithm}[H]
  \begin{algorithmic}
  \caption{Inefficient deterministic solution for ``two tree edges, upper case''}
  \label{alg:deterministic_2edges_up_bad}
  \Function {SolveTwoTreeEdgesUpper} {}
  \For{$e \in E(\Tt)$}
    \State $f_0 \gets \DeepestDnCutNoMin(e)$
    \State $g \gets \mindn[1](e)$
    \State $i \gets 0$
    \While {$f_i \neq e$ and $\maxup[1](e).\mathrm{head} \in \Tt_e$}
      \State $h_i \gets \maxup[1](e)$
      \State $f_{i+1} \gets$ the tree edge with the same head as $h_i$
      \State $i \gets i + 1$
    \EndWhile
    \If {$f_i \neq e$}
      \State add $\{e, f_i, g\}$ to the list of $3$-edge-cuts
    \EndIf
  \EndFor
  
  \Comment {{\small Run the analogous algorithm for $\DeepestDnCutNoMax$ instead of $\DeepestDnCutNoMin$}.}
  \EndFunction

  \end{algorithmic}
\end{algorithm}


In order to optimize the algorithm, we seek to eliminate the $\mathsf{while}$ loop.
Indeed, for each $e \in E(\Tt)$, the process is very similar: start with some edge $f > e$, and then repeatedly replace $f$ with a~higher edge, until a~replacement would result in an~edge closer to the root than $e$.
We can model this process with a~rooted tree $\Uu$ defined as~follows:
\begin{itemize}
  \item the vertices of $\Uu$ are the tree edges of $\Tt$, and $\bot$,
  \item $\bot$ is the root of $\Uu$,
  \item for a~non-root vertex $e$ of $\Uu$, its parent is the tree edge with the same head as $\maxup[1](e)$.
    If the head of $\maxup[1](e)$ coincides with the root of $\Tt$, then the parent of $e$ in $\Uu$ is $\bot$.
\end{itemize}
Naturally, $\Uu$ can be constructed in linear time with respect to the size of $\Tt$; moreover, if an~edge $p$ is a~parent of another edge $q$ in $\Uu$, then $p <_\Tt q$.

Now, the $\mathsf{while}$ loop is equivalent to repeated replacement of $f_0$ with its parent in $\Uu$ as long as the parent is greater or equal than $e$ with respect to $<_\Tt$.
In other words, the final edge $f' := \mathsf{final}(f_0, e)$ is taken as the shallowest ancestor of $f_0$ in $\Uu$ for which $f' \geq_\Tt e$.

This leads to the final idea: we simulate a~forest of rooted subtrees of $\Uu$ using a~disjoint set union data structure $F_\Uu$ (Theorem~\ref{thm:linear_fu}).
Each subtree additionally keeps its root, which can be retrieved from $F_\Uu$ (Lemma~\ref{enrich-parent}).
Initially, each subtree of $\Uu$ contains a~single vertex.

After the initialization of $F_\Uu$, we iterate $e$ over the tree edges of $\Tt$ in the decreasing order of depth in $\Tt$.
Throughout the process, we maintain the following invariant on $F_\Uu$: an edge $ef \in E(\Uu)$ for $e < f$ has been added to the forest if and only if $e$ has been considered at any previous iteration of the main loop.
Hence, at the beginning of the iteration for a~given edge $e$, we add to $F_\Uu$ all tree edges of $\Uu$ originating from $e$.
At this point of time, for every tree edge $f$ such that $f >_\Tt e$, the edge $\mathsf{final}(f, e)$ is given by $F_\Uu.\mathsf{lowest}(f)$.
This reduces the entire $\mathsf{while}$ loop to a~single $\mathsf{lowest}$ query on $F_\Uu$ (Algorithm~\ref{alg:deterministic_2edges_up_good}).
\begin{algorithm}[H]
  \begin{algorithmic}
  \caption{Efficient deterministic solution for ``two tree edges, upper case''}
  \label{alg:deterministic_2edges_up_good}
  \Function {SolveTwoTreeEdgesUpper} {}
  \State $\Uu \gets$ the rooted tree on $E(\Tt) \cup \{\bot\}$ defined above
  \State $F_\Uu \gets$ the disjoint set union data structure built on $\Uu$ (Lemma~\ref{lem:linear_fu_with_lowest})
  \For{$e \in E(\Tt)$, in order from the deepest to the shallowest}
    \For{$c$ -- child of $e$ in $\Uu$}
      \State $F_\Uu.\mathsf{union}(e, c)$
    \EndFor
    \State $g \gets \mindn[1](e)$
    \State $f_0 \gets \DeepestDnCutNoMin(e)$
    \State $f' \gets F_\Uu.\mathsf{lowest}(f_0)$
    \If {$f' \neq e$}
      \State add $\{e, f', g\}$ to the list of $3$-edge-cuts
    \EndIf
  \EndFor

  \Comment {{\small Run the analogous algorithm for $\DeepestDnCutNoMax$ instead of $\DeepestDnCutNoMin$}.}
  \EndFunction

  \end{algorithmic}
\end{algorithm}


It can be easily seen that we initialize $F_\Uu$ on a~tree with $n$ vertices, and we issue $\Oh(n)$ queries to it in total.
Therefore, the whole subroutine runs in time linear with respect to the size of $G$.

Summing up, we replaced each randomized subroutine with its deterministic counterpart, preserving the linear guarantee on the runtime of the algorithm.
We conclude that there exists a deterministic linear-time algorithm listing $3$-edge-cuts in $3$-edge-connected graphs.

\section{Reconstructing the structure of $4$-edge-connected components}
\label{Smu2}

In this section, we show how to build a structure of $4$-edge-connected components of a~$3$-edge-connected graph $G$, given the set $\Cc$ of all $3$-edge-cuts in $G$.

First of all, recall what such a~structure looks like.

\begin{theorem}\cite[Corollary 8]{SimplerCactus}
\label{thm:treerec_existence}
For a~$3$-edge-connected graph $G = (V,E)$, there exists a tree $H = (U,F)$ along with functions $\phi: \Cc \rightarrow F$ and $\psi:V \rightarrow U$,
such that $\phi$ is a bijection from the $3$-edge-cuts of $G$ to the edges of $H$,
and $\psi$ maps (not necessarily surjectively) the vertices of $G$ to the vertices of $H$
in such a way that the whole $4$-edge-connected components are mapped to the same vertex.

Moreover, if a~$3$-edge-cut $c$ partitions the vertices of $G$ into two parts $V_1$ and $V_2$, then $\phi(c)$ partitions the vertices of $H$ into $U_1$ and $U_2$
such that $\psi^{-1}(U_1) = V_1$ and $\psi^{-1}(U_2) = V_2$.
\end{theorem}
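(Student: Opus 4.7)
The plan is to establish three ingredients: (1) any two $3$-edge-cuts in a $3$-edge-connected graph are non-crossing, (2) the vertex set $U$ of $H$ can be taken to be the set of $4$-edge-connected components of $G$ with $\psi$ the canonical projection, and (3) the tree $H$ can be assembled inductively by peeling off one $3$-edge-cut at a time. For step (1), I consider $c_1, c_2 \in \Cc$ inducing partitions $(A_1, B_1)$ and $(A_2, B_2)$ and suppose for contradiction that all four corners $A_1 \cap A_2$, $A_1 \cap B_2$, $B_1 \cap A_2$, $B_1 \cap B_2$ are nonempty. Let $a,b,c,d,e,f$ count the edges between the six pairs of corners; the equalities $|c_1|=|c_2|=3$ give $b+c+d+e = 3$ and $a+c+d+f = 3$. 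On the other hand, each of the four single-corner isolating cuts has size at least $3$ by $3$-edge-connectivity, and summing them yields $2(a+b+c+d+e+f) \geq 12$. Combining with the two equations forces $c+d = 0$; a short follow-up using the same corner-cut lower bounds produces the integer equation $2a=3$, a contradiction.

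\paragraph{Laminarity and vertices of $H$.} Fixing any root $r \in V$ and writing $S_c$ for the side of $c \in \Cc$ not containing $r$, the non-crossing lemma translates directly into the fact that $\{S_c : c \in \Cc\}$ is a laminar family on $V \setminus \{r\}$, i.e., any two of its members are disjoint or nested. I define $U$ as the set of equivalence classes of the relation ``lie on the same side of every $3$-edge-cut'' (equivalently, the $4$-edge-connected components) and $\psi : V \to U$ as the canonical projection; whole $4$-edge-connected components then map to the same vertex of $H$ by construction.

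\paragraph{Inductive construction of $H$, and the main obstacle.} I induct on $|\Cc|$. If $\Cc = \emptyset$, $G$ is $4$-edge-connected and $H$ is a single vertex. Otherwise, pick any $c \in \Cc$ partitioning $V$ into $(V_1, V_2)$; by non-crossing, every $c' \in \Cc \setminus \{c\}$ has both sides contained in the same $V_i$. Let $G_i$ be the graph obtained by contracting $V_{3-i}$ into a single vertex $v_*^{(i)}$ (and discarding self-loops). Each $G_i$ is still $3$-edge-connected, and its $3$-edge-cuts are exactly those $3$-edge-cuts of $G$ supported in $V_i$ (one of the cut-sides containing $v_*^{(i)}$). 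Applying the induction hypothesis gives trees $H_1, H_2$ with bijections $\phi_i$ and projections $\psi_i$; I join them by a new edge between $\psi_1(v_*^{(1)})$ and $\psi_2(v_*^{(2)})$, declare this edge to be $\phi(c)$, and extend $\phi, \psi$ from $\phi_1, \phi_2, \psi_1, \psi_2$ in the obvious way. Bijectivity of $\phi$, the fact that $\psi$ identifies exactly the $4$-edge-connected components, and that removing $\phi(c')$ cleaves $U$ precisely into $\psi(V_1(c'))$ and $\psi(V_2(c'))$ all reduce instantly to the inductive statements on $H_1, H_2$. The main obstacle is the non-crossing lemma, which is the only step that genuinely uses both the $3$-edge-connectedness of $G$ (to lower-bound the four corner cuts) and the oddness of the cut size (to produce the final parity contradiction $2a=3$); this oddness is exactly what causes the structure of $3$-edge-cuts to degenerate from a cactus to a tree, and once non-crossing is in hand the remainder of the argument is essentially bookkeeping.
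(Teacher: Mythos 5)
The paper does not prove this statement itself; it cites it directly from \cite{SimplerCactus}, so there is no in-paper argument to compare against, and your proposal has to be judged on its own. Your non-crossing lemma and its proof are correct: summing the four corner-isolating cut sizes gives $a+b+c+d+e+f \geq 6$, the two $3$-cut equalities give $a+b+c+d+e+f = 6-(c+d)$, hence $c=d=0$, and the residual tight system forces the parity contradiction $2a=3$. This is precisely the argument by which an odd minimum cut makes the cactus collapse to a tree. The laminarity consequence, and the choice of $U$ and $\psi$, are also fine.

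The inductive step, however, has a genuine gap. When you contract $V_{3-i}$ to $v_*^{(i)}$ to form $G_i$, the cut $c$ itself is still a $3$-edge-cut of $G_i$: it is exactly the set of three edges incident to the degree-$3$ vertex $v_*^{(i)}$. So $c \in \Cc(G_1) \cap \Cc(G_2)$, and by the induction hypothesis $H_1$ already contains an edge $\phi_1(c)$ incident to the pendant vertex $\psi_1(v_*^{(1)})$, and likewise $H_2$ contains $\phi_2(c)$. If you form $H$ by taking the disjoint union and adding a fresh edge between $\psi_1(v_*^{(1)})$ and $\psi_2(v_*^{(2)})$, then $|E(H)| = |\Cc(G_1)|+|\Cc(G_2)|+1 = |\Cc|+2$, and your ``extended'' $\phi$ leaves $\phi_1(c)$ and $\phi_2(c)$ outside its image, so $\phi$ is not a bijection onto $E(H)$; bijectivity does not ``reduce instantly.'' The repair is to delete the pendants $\psi_i(v_*^{(i)})$ together with the edges $\phi_i(c)$ before joining (equivalently, contract $\phi_1(c)$ and $\phi_2(c)$ onto the new edge). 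A second, related problem is that your induction measure $|\Cc|$ need not decrease: $|\Cc(G_i)| = |\Cc_i'|+1$ with $\Cc_i'$ the cuts having a side properly inside $V_i$, so if $\Cc_{3-i}' = \varnothing$ (i.e.\ $\phi(c)$ is a leaf edge of $H$) then $|\Cc(G_i)| = |\Cc|$, and if moreover $|V_{3-i}| = 1$ then $G_i = G$. You need to choose $c$ to be an internal edge of the (implicit) tree when one exists and handle the degenerate star case separately, or induct on $|V|$ while picking $c$ with both sides of size at least $2$ and disposing of the remaining case directly. (Also, as a minor slip, ``every $c' \in \Cc\setminus\{c\}$ has both sides contained in the same $V_i$'' should read ``has one side contained in some $V_i$.'')
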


Figure~\ref{fig:treerec_example} shows an~example of decomposition postulated by Theorem~\ref{thm:treerec_existence}.

\begin{figure}[h]
  \centering
  \includegraphics[scale=0.6]{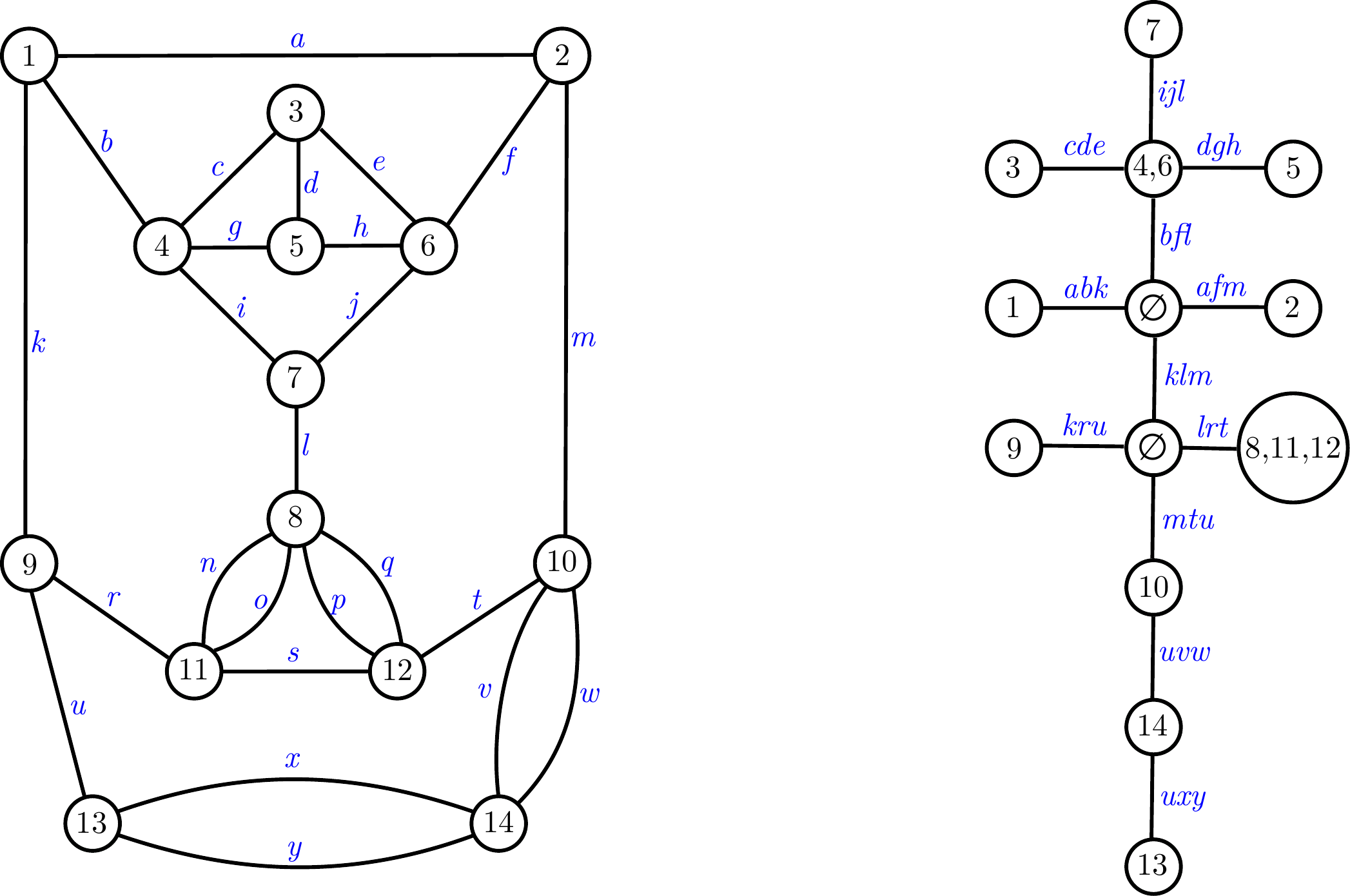}
  \caption{A~$3$-edge-connected graph with $14$ vertices and $25$ edges (left) and its tree $H$ (right).}
  \label{fig:treerec_example}
\end{figure}

The tree $H$ is usually unrooted in the literature.
However, we are going to root it.
Namely, we take a~depth-first search tree $\Tt$ of $G$, rooted at some vertex $r$, and we root $H$ at $\psi(r)$.
For a~vertex $u \in U$, we let $H_u$ denote the subtree of $H$ rooted at $u$.

\begin{definition}
  For a $3$-edge-cut $c$, we define $P(c)$ as the set of vertices from the connected component of
  $G \setminus c$ not containing $r$.
\end{definition}
  We remark that since $c$ is a~minimal cut, $G \setminus c$ consists of two connected components, so $P(c)$ is determined uniquely.

\begin{lemma}\label{lem:order_of_P}
  Let $v,u \in U$, and let $e_1, e_2, \ldots, e_k \in F$ be the sequence of edges of $H$ on the path from $v$ and $u$ in $H$. 
  If $v$ is an ancestor of $u$, then for each pair of integers $i$, $j$ such that $1 \leq i < j \leq k$, we have $|P(\phi^{-1}(e_i))| > |P(\phi^{-1}(e_j))|$.
\end{lemma}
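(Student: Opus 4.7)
My plan is to translate the question about the sets $P(\phi^{-1}(e_i))$ in $G$ into a~question about the subtrees of $H$, and then argue that along a~root-to-leaf path in $H$, these subtrees are strictly nested.

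First, I would parametrize the path in $H$ from $v$ to $u$ as $v = v_0, v_1, \dots, v_k = u$, where $e_i = v_{i-1}v_i$ and $v_i$ is a~child of $v_{i-1}$ (using that $v$ is an ancestor of $u$). Next, I would observe that for every $i \geq 1$, the vertex $v_i$ is a~proper descendant of $\psi(r)$: either $v_0 = \psi(r)$ and then every $v_i$ with $i \geq 1$ lies strictly below the root, or $v_0$ is itself a~proper descendant of $\psi(r)$, and then so is every $v_i$. In either case, $\psi(r) \notin V(H_{v_i})$. Deleting $e_i$ partitions $V(H)$ into $V(H_{v_i})$ and $V(H) \setminus V(H_{v_i})$, with the latter containing $\psi(r)$. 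By the second part of Theorem~\ref{thm:treerec_existence}, this translates to the partition of $V(G)$ induced by the cut $\phi^{-1}(e_i)$, with $r$ lying in $\psi^{-1}(V(H) \setminus V(H_{v_i}))$. Hence by the definition of $P$, we obtain
\[ P(\phi^{-1}(e_i)) = \psi^{-1}(V(H_{v_i})). \]

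Since $v_j$ is a~proper descendant of $v_i$ whenever $i < j$, we have $V(H_{v_j}) \subsetneq V(H_{v_i})$, and therefore the preimages satisfy $\psi^{-1}(V(H_{v_j})) \subseteq \psi^{-1}(V(H_{v_i}))$, i.e., $P(\phi^{-1}(e_j)) \subseteq P(\phi^{-1}(e_i))$. This already gives the weak inequality $|P(\phi^{-1}(e_i))| \geq |P(\phi^{-1}(e_j))|$.

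The main obstacle is upgrading this to strict inequality, because $\psi$ is not required to be surjective; it is conceivable a~priori that $V(H_{v_i}) \setminus V(H_{v_j})$ contains only vertices of $H$ with empty preimage. I would overcome this by appealing to the bijectivity of $\phi$ together with a~simple structural fact: in a~$3$-edge-connected graph, two distinct $3$-edge-cuts cannot induce the same bipartition of $V(G)$, because the boundary edges of any given shore are uniquely determined by that shore. Since $e_i \neq e_j$ yields $\phi^{-1}(e_i) \neq \phi^{-1}(e_j)$, this forces $P(\phi^{-1}(e_i)) \neq P(\phi^{-1}(e_j))$. Combined with the inclusion above, we conclude $P(\phi^{-1}(e_j)) \subsetneq P(\phi^{-1}(e_i))$, and hence the strict size inequality claimed by the lemma.
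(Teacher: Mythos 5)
Your proposal is correct and follows essentially the same approach as the paper: identify $P(\phi^{-1}(e_i))$ with $\psi^{-1}(V(H_{v_i}))$, use the nesting of subtrees along the path to obtain the inclusion, and upgrade to strict inclusion by noting that $\phi$ is a bijection and distinct $3$-edge-cuts yield distinct sets $P(c)$ since $P(c)$ determines $c$. The only cosmetic difference is that the paper first reduces to consecutive edges, while you argue directly for arbitrary $i < j$.
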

\begin{proof}
  It suffices to only consider the inequalities for the consecutive edges from the considered path.
  To this end, let us fix some $i \in [1, k - 1]$ and consider the edges $(x,y) := e_i$ and $(y,z) := e_{i+1}$.
  By the definition of $H$, we have $P(\phi^{-1}(e_i)) = \psi^{-1}(H_y)$ 
  since both sides describe the connected component of $G \setminus \phi^{-1}(e_i)$ not containing $r$.
  Similarly, $P(\phi^{-1}(e_{i+1})) = \psi^{-1}(H_z)$.
  As $H_z$ is a~subtree of $H_y$, we infer that
  $$P(\phi^{-1}(e_{i+1})) = \psi^{-1}(H_z) \subseteq \psi^{-1}(H_y)= P(\phi^{-1}(e_i)).$$
  This implies that $|P(\phi^{-1}(e_i))| \ge |P(\phi^{-1}(e_{i+1}))|$.
  However, observe that each set $P(c)$ uniquely defines a~cut in the graph $G$ as the set of all edges between $P(c)$ and $V \setminus P(c)$.
  By the definition of $H$, we have $\phi^{-1}(e_i) \neq \phi^{-1}(e_{i+1})$.
  Therefore, $P(\phi^{-1}(e_i)) \neq P(\phi^{-1}(e_{i+1}))$, which finishes the proof of the lemma.
\end{proof}

\begin{lemma}\label{lem:part_sizes}
  Given a $3$-edge-connected graph $G$ and the set of all its $3$-edge-cuts $\Cc$, the sizes of $P(c)$ for all $c \in \Cc$ can be computed in linear time with respect to the size of $G$.
\end{lemma}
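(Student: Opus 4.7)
The plan is to process each cut $c \in \Cc$ according to how many of its edges belong to a~fixed DFS tree $\Tt$ of $G$ rooted at an~arbitrary vertex $r$, and to handle the $3$-tree-edge case via the same recursion as in Section~\ref{3t}. After a linear-time DFS and the precomputation of subtree sizes $|\Tt_v|$, the easy cases are immediate. If $c$ contains a single tree edge $e$, then $G \setminus c$ splits into $\Tt \setminus \Tt_e \ni r$ and $\Tt_e$, so $|P(c)| = |\Tt_e|$. If $c$ contains two tree edges $e <_{\Tt} f$ and a~back edge, then the analysis opening Section~\ref{2t} shows that, in both the lower and upper subcases, the two components of $G \setminus c$ are $L = \Tt_e \setminus \Tt_f$ and $R = V \setminus L$; since $r \in R$, we get $|P(c)| = |\Tt_e| - |\Tt_f|$.

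The main obstacle is the case when all three edges of $c$ are tree edges. Removing such a~cut splits $\Tt$ into four subtrees, and the partition of $V$ into $L$ and $R$ is determined by the back edges running between these subtrees---information that is not captured by subtree sizes alone. To handle this case, the idea is to reuse the recursive contraction from Section~\ref{3t}: contract every back edge of $\Tt$, obtaining a~multigraph $G'$ whose vertices are the connected components of $G - E(\Tt)$, each endowed with a~weight equal to the number of original vertices it contains. As argued in Section~\ref{ssec:randomized_time_analysis}, $G'$ remains $3$-edge-connected with $|E(G')| \le \tfrac{2}{3}|E(G)|$. Moreover, the $3$-tree-edge cuts of~$G$ are in bijection with the $3$-edge-cuts of~$G'$, and for corresponding cuts each super-vertex of $G'$ lies entirely in~$L$ or entirely in~$R$, so $|P(c)|$ in~$G$ equals the total weight of super-vertices on the non-$r$ side of the cut in~$G'$.

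This reduces the $3$-tree-edge case to a~weighted variant of the lemma applied on $G'$: the same procedure still works when subtree sizes are replaced by weighted subtree sums, and the $1$- and $2$-tree-edge formulas are interpreted accordingly. Recursion is then applied to any remaining $3$-tree-edge cuts of $G'$, classified with respect to a~newly computed DFS tree~$\Tt'$. Each recursive level performs work linear in the current edge count and shrinks that count by a~constant factor, so the recurrence $T(m) \le O(m) + T(\tfrac{2}{3}m)$ resolves to $T(m) = O(m)$ by the geometric series, establishing the desired linear-time bound.
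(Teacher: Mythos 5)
Your proof is correct, but it follows a noticeably heavier route than the paper's. The paper handles \emph{all} cuts, regardless of how many tree edges they contain, with a single inclusion--exclusion formula: defining $\cntOnPath(c, v)$ as the number of tree edges of $c$ on the $\Tt$-path from $r$ to $v$, it observes that $v \in P(c)$ iff $\cntOnPath(c, v)$ is odd, and derives
\[
|P(c)| \;=\; \sum_{e \in c \cap E(\Tt)} (-1)^{1 + \cntOnPath(c,\, \deeper(e))}\,|\Tt_e|,
\]
which is evaluated in $\Oh(1)$ per cut after precomputing subtree sizes. Your $1$- and $2$-tree-edge formulas are exactly the two easiest specializations of this identity, and the $3$-tree-edge case is also directly covered by it (e.g.\ for nested $e_1 < e_2 < e_3$ it gives $|\Tt_{e_1}| - |\Tt_{e_2}| + |\Tt_{e_3}|$), so there is no need to treat it separately, and in particular no need to contract, carry vertex weights, or recurse. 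Your recursive solution does work --- the bijection between $3$-tree-edge cuts of $G$ and $3$-edge-cuts of $G'$ holds, the vertex-weighted subtree sums give the right quantities, and the contracted graph remains $3$-edge-connected --- but its linear running time quietly relies on $|\Cc^{(i)}| = \Oh(|E(G^{(i)})|)$ at every level (i.e.\ the $\Oh(n)$ bound on the number of $3$-edge-cuts of a $3$-edge-connected graph); without invoking that bound, charging the cut-classification work only to the shrinking edge count is not immediate. The paper's closed-form formula sidesteps all of this: one DFS, one pass over $\Cc$.
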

\newcommand{\cntOnPath}{{\rm cntOnPath}}
\newcommand{\deeper}{{\rm deeper}}
\begin{proof}
  Let us consider a $3$-edge-cut $c$.
  If two vertices are connected by an edge $(u, v)$ of the cut, then exactly one of $u$, $v$ is contained in $P(c)$.
  Similarly, if an edge $(u,v)$ is not contained in $c$, then $u \in P(c) \Leftrightarrow v \in P(c)$.
  The root $r$ is by definition not in $P(c)$.
  Therefore, to decide whether a vertex $v$ belongs to $P(c)$, it is enough to check the parity of the number of edges of the cut
  on the path from $r$ to $v$ in $\Tt$.
  Precisely, let us define $\cntOnPath(c, v) := \sum_{e \in c \cap \Tt} [v \in \Tt_e]$ as the number of the tree edges of the cut $c$ on the path from the root of $\Tt$ to a~vertex $v \in V(\Tt)$, and $\deeper(e)$ as the deeper endpoint of $e$ in $\Tt$.
  Then, a~vertex $v$ is disconnected from the root $r$ by the $3$-edge-cut $c$ if there is an~odd number of the edges of $c$ on the path from $r$ to $v$ in $\Tt$:
  $$|P(c)| = \sum_{v \in V} [2 \nmid \cntOnPath(c,v)] = \sum_{e \in \Tt \cap c} (-1)^{1 + \cntOnPath(c, \deeper(e))} \cdot |\Tt_e|.$$
  Assuming the sizes of the subtrees of $\Tt$ have been precomputed, the value of $P(c)$ can be computed for a~particular cut $c$ in constant time.
  The total time complexity of this algorithm is $\Oh(m + |\Cc|)$,
  which is linear in the size of $G$ as $|\Cc| \leq \Oh(n)$.
\end{proof}

To reconstruct $H$, we will also use the following structural lemma about cuts sharing the same edge of graph $G$.
\begin{lemma}\label{lem:edges_are_paths}
  For an edge $(u,v) = e \in E$, let $l(e)$ be the set of all $3$-edge-cuts containing $e$.
  The image $\phi(l(e))$, i.e., the set of all edges of $H$ corresponding to the edge cuts containing $e$, forms a path in $H$ between $\psi(u)$ and $\psi(v)$.
\end{lemma}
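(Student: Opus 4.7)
The plan is to prove the lemma by a direct double-inclusion argument, using Theorem~\ref{thm:treerec_existence} as the main tool. The goal is to identify $\phi(l(e))$ with the set of edges lying on the unique path in $H$ between $\psi(u)$ and $\psi(v)$; once this is done, the fact that $\phi(l(e))$ forms a path is immediate since $H$ is a tree.

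For the forward inclusion, I would pick any $c \in l(e)$. Since $c$ is a $3$-edge-cut containing $e = (u,v)$, removing $c$ from $G$ splits $V$ into two parts $V_1, V_2$ with $u \in V_1$ and $v \in V_2$. Applying Theorem~\ref{thm:treerec_existence}, the edge $\phi(c)$ splits $V(H)$ into $U_1, U_2$ with $\psi^{-1}(U_i) = V_i$, so $\psi(u) \in U_1$ and $\psi(v) \in U_2$. This precisely means that $\phi(c)$ lies on the unique path between $\psi(u)$ and $\psi(v)$ in $H$.

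For the reverse inclusion, take any edge $f$ of $H$ on the path between $\psi(u)$ and $\psi(v)$. Since $\phi$ is a bijection, $f = \phi(c')$ for some $3$-edge-cut $c'$ of $G$. Removing $f$ from $H$ partitions $V(H)$ into $U_1, U_2$ with $\psi(u) \in U_1$ and $\psi(v) \in U_2$; by Theorem~\ref{thm:treerec_existence} this corresponds to a partition of $V$ with $u \in \psi^{-1}(U_1)$ and $v \in \psi^{-1}(U_2)$. In particular, $e = (u,v)$ is one of the edges crossing the partition induced by $c'$, so $e \in c'$ and thus $c' \in l(e)$, giving $f \in \phi(l(e))$. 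The only mild caveat is that if $\psi(u) = \psi(v)$, then no edge of $H$ separates them and the path is empty; this is consistent because in that case $u, v$ lie in the same $4$-edge-connected component and $e$ cannot belong to any $3$-edge-cut, so $l(e) = \emptyset$ as well.

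I do not expect any real obstacle here: the entire argument reduces to mechanically chasing the correspondence guaranteed by Theorem~\ref{thm:treerec_existence}. The only point that requires a sentence of justification is that every edge $f$ on the $\psi(u)$--$\psi(v)$ path indeed comes from a cut containing $e$ (not merely separating $u$ from $v$), but this follows immediately because any cut separating $u$ from $v$ must contain every edge between them.
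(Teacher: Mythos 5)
Your proof is correct and follows essentially the same route as the paper's: both arguments chase the correspondence of Theorem~\ref{thm:treerec_existence} (together with the minimality of $3$-edge-cuts in a $3$-edge-connected graph, which ensures every edge of such a cut actually crosses the induced bipartition) to show that an edge $f$ of $H$ lies on the $\psi(u)$--$\psi(v)$ path if and only if $\phi^{-1}(f)$ contains $e$. The paper phrases this as a single case split over the location of $f$ rather than as an explicit double inclusion, but the content is identical, and you even handle the $\psi(u)=\psi(v)$ edge case that the paper relegates to a remark after the lemma.
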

\begin{proof}
  Let us consider the vertices $t_u := \psi(u)$ and $t_v := \psi(v)$ of $H$, and pick an~edge $f \in F$.
  If $f$ lies on the path between $t_u$ and $t_v$ in $H$, then 
  $f$ divides $H$ into two parts $U_1$ and $U_2$ such that $t_u \in U_1$ and $t_v \in U_2$.
  Thus, $f$ corresponds to the cut $\phi^{-1}(f)$ separating $u$ from $v$, which must necessarily contain $e=(u,v)$ as an~edge.
  
  Similarly, if $f$ does not lie on the path between $t_u$ and $t_v$ in $H$, then
  $f$ divides $H$ into two parts $U_1$ and $U_2$ such that $t_u,t_v \in U_1$ and $t_u,t_v \not\in U_2$.
  Therefore, the cut $\phi^{-1}(f)$ does not separate $u$ from $v$.
  Since this cut is minimal, we get that $e \not\in \phi^{-1}(f)$.
  This completes the proof.
\end{proof}
  We remark an~edge case in Lemma~\ref{lem:edges_are_paths}: if $\psi(u) = \psi(v)$ for some edge $e=(u, v)$, then the image $\phi(l(e))$ is empty.
  Moreover, since each $3$-edge-cut $c \in \Cc$ contains at least one tree edge of $\Tt$, each edge $\phi(c)$ is covered by at least one path $\phi(l(e))$ for $e \in E(\Tt)$.
  Equivalently, $H$ is a~tree, rooted at $\psi(r)$, equal to the union of all paths $\phi(l(e))$ for $e \in E(\Tt)$.
  This representation of $H$ is the cornerstone of our algorithm reconstructing $H$ from $G$ and $\Cc$.


\begin{lemma}
  There exists a linear time algorithm which, given a~graph $G$ and the list $\Cc$ of all 3-edge-cuts of $G$, constructs the tree $H$, along with the mappings $\phi$ and $\psi$.
\end{lemma}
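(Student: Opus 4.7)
The plan hinges on two observations. First, by Theorem~\ref{thm:treerec_existence}, after rooting $H$ at $\psi(r)$, each $P(c)$ equals $\psi^{-1}(H_w)$ for $w$ the deeper endpoint of $\phi(c)$; since the subtrees $H_w$ of a rooted tree form a laminar family by inclusion, so does $\{P(c) : c \in \Cc\}$. Consequently, if we define $\kappa(v)$ as the cut $c$ with $v \in P(c)$ minimizing $|P(c)|$ (or $\perp$ if no such cut exists), then $\kappa(v)$ uniquely determines $\psi(v)$: two vertices share a $\psi$-image iff they share a $\kappa$-value, and $\psi(v)$ is the deeper endpoint of $\phi(\kappa(v))$ (or $\psi(r)$ when $\kappa(v) = \perp$). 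Reconstructing $H$ thus reduces to computing $\kappa$ together with the parent relation $\mathrm{par}(c)$ in the laminar tree of $\{P(c)\}$.

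I would compute both in a single DFS of $\Tt$ from $r$, maintaining a stack $\Ss$ which equals $S(v) := \{c : v \in P(c)\}$ ordered from bottom to top by $|P(c)|$ strictly decreasing. By Lemma~\ref{lem:order_of_P}, this is precisely the sequence of cuts along the $\psi(r) \to \psi(v)$ path in $H$. When visiting vertex $v$, I set $\kappa(v) := \mathrm{top}(\Ss)$. When traversing a tree edge $e = (u,v)$, I update $\Ss$ from $S(u)$ to $S(v) = S(u) \oplus l(e)$ as follows: partition $l(e)$ into ``leaving'' cuts $\{c : u \in P(c)\}$ and ``entering'' cuts $\{c : v \in P(c)\}$, each classified in $\Oh(1)$ using the $\cntOnPath$ parity from the proof of Lemma~\ref{lem:part_sizes}. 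By Lemma~\ref{lem:edges_are_paths}, $\phi(l(e))$ traces an up-then-down path in $H$ from $\psi(u)$ through $\mathrm{LCA}(\psi(u),\psi(v))$ to $\psi(v)$; the leaving cuts form its upward half, and by the laminar structure they all have $|P|$ smaller than every cut in $S(u) \cap S(v)$, hence occupy a contiguous top suffix of $\Ss$ and can be popped in ascending $|P|$ order straight off the top. The entering cuts are then pushed in descending $|P|$ order, yielding the correct $S(v)$. Backtracking in the DFS is the symmetric reverse update.

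As a byproduct, at each push of a cut $c$ onto $\Ss$ the element below it is exactly $\mathrm{par}(c)$---the cut of smallest $|P|$ strictly greater than $|P(c)|$ whose $P$-set contains $P(c)$---or $\perp$ if the stack is empty. By laminarity this value is identical across all pushes of $c$, so recording it once suffices. After the DFS, I assemble $H$ directly: introduce a vertex $w_c$ for each $c \in \Cc$ and a root vertex $w_r$; for each $c$ add an edge $\phi(c) := (w_{\mathrm{par}(c)}, w_c)$ under the convention $w_\perp := w_r$; and set $\psi(v) := w_{\kappa(v)}$ (again with $w_\perp = w_r$).

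For the linear running time I would bucket-sort all cuts by $|P|$ (using Lemma~\ref{lem:part_sizes}), then in one pass distribute each cut $c$ to the $|P|$-sorted list $l(e)$ of every tree edge $e \in c$. Since $\sum_e |l(e)| = \sum_c |c \cap E(\Tt)| \le 3|\Cc| = \Oh(n)$, this preprocessing and the total DFS stack work are both $\Oh(n)$, and final assembly of $H$ is $\Oh(n)$, giving an overall $\Oh(n+m)$ bound. The main technical obstacle is verifying the stack update: that leaving cuts are always a contiguous top suffix of $\Ss$ and that pushing entering cuts in descending $|P|$ order yields precisely the sorted new chain. Both follow from combining the laminar structure of $\{P(c)\}$ with Lemma~\ref{lem:edges_are_paths}, which together force $l(e)$ to be structured as an up-down detour of the current chain.
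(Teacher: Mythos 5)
Your proposal is correct and rests on the same key pillars as the paper's proof: the laminar/path structure from Lemma~\ref{lem:order_of_P} and Lemma~\ref{lem:edges_are_paths}, the linear-time computation and bucket-sorting of $|P(c)|$ from Lemma~\ref{lem:part_sizes}, and a traversal of $\Tt$ that processes each cut only when a tree edge in it is crossed, giving the $\Oh(n+m)$ bound. The only genuine difference is one of bookkeeping: the paper iterates over the tree edges of $\Tt$ in DFS order and grows $H$ explicitly, walking up the partially-built tree from $\psi(u_k)$ while marking touched edges, then walking down creating fresh edges as needed; you instead run a full DFS of $\Tt$ maintaining the root-to-current-vertex chain of cuts as an explicit stack, classify each cut in $l(e)$ as leaving or entering via the $\cntOnPath$ parity, pop the leaving suffix, push the entering cuts, record the parent of each cut at the moment it is first pushed, and assemble $H$ at the very end from the recorded $\mathrm{par}$ and $\kappa$ maps. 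The paper's variant avoids explicit backtracking (it jumps directly to $\psi(u_k)$, which is always already present in $H$); yours makes the invariant ``$\Ss$ equals the chain $S(v)$'' very transparent and derives $\mathrm{par}(c)$ as a simple byproduct of the stack discipline. Both are linear and both reduce to the same combinatorial facts, so this is the same proof modulo data-structure presentation.
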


\begin{proof}

  First, for each edge $e \in \Tt$, one can create a list $l(e)$ of all $3$-edge-cuts containing $e$.
  We sort each such list decreasingly by the size of $P(c)$; this can be done using radix sort in linear time.

  Lemma~\ref{lem:part_sizes} guarantees an~important property of each sorted list of cuts: if two cuts $c_1, c_2$ of the same list map to the edges $e_1$, $e_2$ in $H$, respectively, such that $e_1 <_H e_2$, then $c_1$ appears earlier in the list than $c_2$.

  Let $e_1, e_2, \dots, e_{n-1} \in E(\Tt)$ be the sequence of edges visited by a~depth-first search of $\Tt$.
  We note that this sequence of edges is consistent with the tree order $<_\Tt$; i.e., if $e_i <_\Tt e_j$ for some two edges in the sequence, then $i < j$.
  Let $e_i = u_i v_i$, where $u_i$ is the vertex of $\Tt$ closer to the root~$r$.

  We will create the sought tree $H$ iteratively.
  Initially, take $H$ as the~tree containing a~single vertex $\psi(r)$.
  We consider the edges $e_1, e_2, \dots, e_{n-1}$ of $\Tt$ in this order, maintaining the following invariant after $k$ iterations of the algorithm:
  
  
  \paragraph{Invariant.} $H$ is a~connected tree, rooted at $\psi(r)$, equal to the union of all paths $\phi(l(e_i))$ for $i \in \{1, 2, \dots, k\}$.
  
  \medskip
  
  
  It is clear that after $n - 1$ iterations, $H$ will be the rooted tree representing the structure of all $3$-edge-cuts in $G$.
  
  Consider the $k$-th iteration of the algorithm, $k \in \{1, 2, \dots, n - 1\}$.
  In this iteration, we need to add to $H$ the path $\phi(l(e_k))$, which originates from $\psi(u_k)$ and terminates at $\psi(v_k)$.
  To this end, we first notice that $\psi(u_k)$ already exists as a~vertex of $H$: either $u_k = r$, which means that $\psi(u_k)$ is the root of $H$; or $u_k$ is the head $v_t$ of some edge $e_t$ earlier in the order, which implies that $\psi(u_k) = \psi(v_t)$ exists in $H$ as one of the endpoints of $\phi(l(e_t))$.
  Hence, some prefix of $\phi(l(e_k))$ already exists in $H$, and it only remains to add the suffix of this path to $H$.

  The considered path consists of two vertical parts: the first, from $\psi(u_k)$ to the lowest common ancestor of $\psi(u_k)$ and $\psi(v_k)$ towards the root $\psi(r)$, and the second, from the lowest common ancestor to $\psi(v_k)$ away from the root of $H$.
  Fortunately, the first part of the path is already included in $H$: since the tree is rooted, the entire vertical path from the root $\psi(r)$ to $\psi(u_k)$ is present in $H$.
  Then, the second part of the path is clearly formed by the edges of $l(e)$ not present in the first part of the path, sorted in the same order as $l(e)$ (Lemma~\ref{lem:order_of_P}).
  
  Now, adding the path $\phi(l(e_k))$ to $H$ is rather straightforward: first, starting from $\psi(u)$, we go up the tree $H$ along the edges of $H$ corresponding to the edge cuts containing $e_k$.
  Then, we start going down the tree: we iterate the list $l(e_k)$ of cuts, excluding the cuts that correspond to the edges visited in the first part of the traversal.
  For each such cut, we go down the tree along the edge corresponding to the cut (creating it, if necessary).
  This is summed up by the following implementation of a~single iteration of the algorithm:

  

  \begin{algorithm}[H]
    \begin{algorithmic}
    \caption{A single iteration of the algorithm}
    \label{alg:AddPathToH}
    \Function {AddPathToH} {e} \Comment $e = (u, v)$ is an edge of $\Tt$
    \State $x \gets \psi(u)$
    
    \Comment Go up the tree along the first part of the path, marking the edges visited on the way.
    \While{$e \in \phi^{-1}({\rm EdgeToParent}(x))$}
      \State $\mathrm{touched}(\phi^{-1}({\rm EdgeToParent}(x))) \gets e$
      \State $x \gets \mathrm{parent}(x)$
    \EndWhile
    
    \Comment Go down the tree along the edges of $l(e)$ not visited by the first loop.
    \For{$c \in l(e)$}
      \If{$\mathrm{touched}(c) \neq e$}
        \If{$\phi(c) = \bot$}
          \State $\phi(c) \gets$ a fresh edge in $H$ from $x$ to a~new vertex, corresponding to the cut $c$
        \EndIf
        \State \textbf{assert} the shallower end of $\phi(c)$ is equal to $x$
        \State $x \gets$ the deeper end of $\phi(c)$
      \EndIf
    \EndFor
    \State $\psi(v) \gets x$
    \EndFunction
    \end{algorithmic}
  \end{algorithm}

  We remark that each $3$-edge-cut $c$ is considered in only a~constant number of calls to \textsc{AddPathToH}: an~edge of $H$ corresponding to $c$ is traversed by $\text{\textsc{AddPathToH}}(e)$ only if $e \in c$.
  Therefore, the total time complexity of all calls to \textsc{AddPathToH} is $\Oh(m + |\Cc|) = \Oh(m + n)$.
\end{proof}

  This concludes the construction of a~tree representing all $3$-edge-cuts in $G$.
  As a~result, each vertex of $H$, as long as it is not empty, contains a~single $4$-edge-connected component of $G$.
  Hence, this algorithm also computes the decomposition of a~$3$-edge-connected graph $G$ into $4$-edge-connected components in total linear time.

\section{Open problems}
\label{open-problems}

As a natural open problem whose resolving would complement this result nicely, we suggest investigating whether it is possible to design an~algorithm computing 4-vertex-connected components in linear time.

\paragraph{Problem 1.} Given an undirected graph $G = (V, E)$, is it possible to find all 4-vertex-connected components of $G$ in linear time?

\bigskip

Additionally, it might be worth investigating whether this result can be lifted to higher connectivities.
Admittedly, this seems quite complicated as the algorithm presented in Sections~\ref{sec:randomized_3cut} and \ref{sec:deterministic_3cut} is crafted specifically for $4$-edge-connectivities.

\paragraph{Problem 2.} Given an undirected graph $G = (V, E)$, is it possible to find all 5-edge-connected components of $G$ in linear time?

\bigskip

We also remark that our algorithm assumes the word RAM model in which we can perform any arithmetic and bitwise operations on $\Oh(\log n)$-bit words in constant time; this is required by Theorem~\ref{thm:linear_fu}.
The natural question is whether this assumption can be avoided.

\paragraph{Problem 3.} Given an undirected graph $G = (V, E)$, is it possible to find all 4-edge-connected components of $G$ in linear time in the pointer machine model?

\bibliographystyle{abbrv}
\bibliography{4-connected-components.bib}

\end{document}